\documentclass[aps,prx,reprint,superscriptaddress,nofootinbib,floatfix]{revtex4-2}

\usepackage[utf8]{inputenc}

\usepackage{amsmath}
\usepackage{amssymb}
\usepackage{mathtools}
\usepackage{amsthm}
\usepackage{physics}
\usepackage{bbold}

\usepackage{graphicx}
\usepackage{xcolor}
\usepackage{booktabs}
\usepackage[caption=false]{subfig}

\usepackage{tikz}
\usetikzlibrary{quantikz2}
\usetikzlibrary{arrows.meta,positioning,calc,fit,backgrounds,decorations.pathmorphing}

\usepackage{thm-restate}

\usepackage[colorlinks=true, citecolor=red, linkcolor=red, urlcolor=red]{hyperref}

\usepackage{cleveref}

\newcommand{\KL}{D_{\mathrm{KL}}}
\newcommand{\TVD}{\mathrm{TVD}}

\newtheorem{definition}{Definition}

\newtheorem{corollary}{Corollary}

\begin{document}

\title{Quantum Fourier Generative Models Trainable at Large Scale}

\author{Cenk T\"uys\"uz}
\email{cenk.tuysuz@cern.ch}
\affiliation{European Organisation for Nuclear Research (CERN), 1211 Geneva, Switzerland}
\author{Oleksandr Kyriienko}
\affiliation{School of Mathematical and Physical Sciences, University of Sheffield, Sheffield S10 2TN, United Kingdom}
\author{Michele Grossi}
\affiliation{European Organisation for Nuclear Research (CERN), 1211 Geneva, Switzerland}

\date{\today}

\begin{abstract}
We propose an algorithmic framework for building and training quantum generative models corresponding to multivariate probability distributions.
Our model uses parallel Fourier feature maps for embedding continuous-valued variables combined with a forrelation-type quantum circuit for tuning Fourier coefficients of the quantum model.
Crucially, we develop a distinct training strategy where training is enabled at large scale by log-likelihood loss with unbiased Monte Carlo estimator based on Parseval's identity.
Unlike prior work that relied on maximal mean discrepancy (MMD) loss, our approach goes beyond matching just low frequency moments, while enabling efficient classical training.
Once the model is trained, we use inverse quantum Fourier transforms to map it into a separate sampling circuit in the computational basis.
We demonstrate the efficiency of the suggested framework by validating loss estimation at the scale of over 1000 qubits on a single GPU.
We show that univariate and bivariate models with highly non-trivial structure can be trained to low total variation distance, while fine-tuned IQP models with MMD loss show poor performance.
Comparing to classical baselines represented by normalizing flow and diffusion models, we show that our approach avoids oversmoothing and preserves multi-modal structure of the target.
Finally, we have deployed the trained models on superconducting quantum devices, successfully sampling distributions with per-sample execution times of approximately $300\,\mu\mathrm{s}$.
Our work shows that quantum generative models with the train-on-classical deploy-on-quantum approach can provide both high-quality structure at increased scale and fast sampling access needed for inference.
\end{abstract}

\maketitle

\section{Introduction}
\label{sec:intro}

\begin{figure*}[!t]
    \centering
    \includegraphics[width=\linewidth]{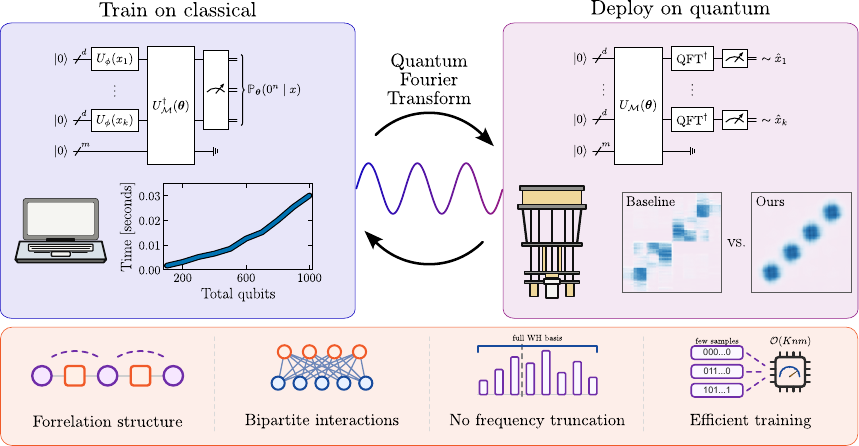}
    \caption{
    \textbf{Train-on-classical deploy-on-quantum workflow.}
    During training, data are encoded through a Fourier feature map and the model parameters are optimized classically in the latent Fourier representation.
    After training, the same parameters are deployed in a quantum sampling circuit, where inverse quantum Fourier transforms ($\mathrm{QFT}^\dagger$) produce samples in the computational basis.
    The construction combines a forrelation-inspired architecture, bipartite visible-hidden interactions, and efficient classical training. The training circuit computes Eq.~\eqref{eq:marginal_single}, while the sampling circuit computes Eq.~\eqref{eq:sampling}.
    The training loss is built from visible marginal likelihoods, making it more sensitive to a target distribution than MMD-type losses.
    }
    \label{fig:workflow}
\end{figure*}

Generative modelling is based on representing, learning, and sampling complex multivariate probability distributions \cite{bengio2013,rezende15,song2021denoising,song2021scorebased}.
Quantum generative modeling takes advantage of the intrinsic probabilistic nature of quantum systems and the ability to sample through projective measurements \cite{Liu2018,Zoufal2019,Coyle2020}.
The exponential sampling advantage for certain families of quantum circuits \cite{bremner2010,boson-sampling-adv,Arute2019,RevModPhys.95.035001} motivated the development of trainable quantum generative models (QGMs) that can learn efficiently while improving the scaling for sample generation~\cite{huang2025}. 
Several QGM families have been proposed, including quantum circuit Born machines (QCBMs) \cite{Liu2018,Zoufal2019,Coyle2020}, quantum Boltzmann machines (QBMs) \cite{amin2018, coopmans2024, tuysuz2024, demidik2025, demidik2026}, physics-informed samplers \cite{kyriienko2024,Paine2023,Bak02026}, hybrid and photonic architectures \cite{Oszmaniec2022,Wang2023,kailasanathan2026,bacarreza2026} etc. 

Training and inference of quantum generative models crucially depends on their architecture and loss~\cite{herbst2025}. The trade-off between the expressivity and trainability of variational quantum circuits \cite{Holmes2022,CerezoHolmes2025} often limits training to smaller system sizes due to vanishing gradients~\cite{mcclean2018,Larocca2025} and local minima \cite{you2021,anschuetz2023}, given that efficient training implies low shot noise when training on quantum devices. However, this trade-off is alleviated if gradients can be calculated classically, at large scale and the required precision, avoiding costly parameter shift rules \cite{Schuld2019,Kyriienko2021GPSR,Wierichs2022}. Here, generative modelling offers a distinct setting, as proposed for QGMs based on instantaneous quantum polynomial (IQP) circuits~\cite{Kasture2023}. For this architecture training can be done on classically tractable loss, while inference is computationally hard and off-loaded to quantum devices. The separation of training and inference stages hence led to emergence of QGMs based on \emph{train-on-classical deploy-on-quantum} approach~\cite{Recio-Armengol2025, kurkin2025}, which also extended to fermionic linear optics circuits~\cite{bako2025}, shallow circuits \cite{huang2025}, and boson-sampling architectures~\cite{Kolarovszki2026,gottlieb2026,kurkin2026}.

For IQP-based QCBMs, the classically-tractable training objective is a maximum mean discrepancy (MMD) loss, computed from expectation values of diagonal observables~\cite{vandennest2010, rudolph2024}. Effectively, this relies on matching lowest moments of the target distribution performed in a spectral domain \cite{belis2026}, where in a Walsh-Hadamard representation the kernel puts large weight on low-order correlations~\cite{herrerogonzalez2025, shen2026}. When the target has strong correlations across features, the relevant structure can lie in high-order components that the kernel weights weakly, so these correlations contribute little to the loss and its gradient and MMD training can miss them even when the model class can represent the target, as we show in Section~\ref{sec:numerics}. The underlying issue is not the use of spectral information itself~\cite{wakeham2024,belis2026}, but the fixed spectral weighting imposed by a prescribed MMD kernel. To get high-quality matching of QGMs to the target distribution, we need qualitatively new approaches to model architecture building and training objectives.

In this work, we propose a different algorithmic framework for building generative models based on \emph{train-on-classical deploy-on-quantum} approach (see Fig.~\ref{fig:workflow} for a visual summary). Our framework combines: 1) working in Fourier space; 2) forrelation-type circuits with visible-hidden qubit separation; 3) efficient Monte Carlo (MC) estimator of a log-likelihood loss, advancing beyond MMD and trained at large scale; 4) quantum Fourier transformed output for fast sampling over extended support. We implement this workflow \emph{in-silico}, evaluating up to 1000 qubits and testing for different targets, as well as running inference on real quantum hardware. The proposed framework allows us to model multivariate distributions of continuous variables relevant for correlated scientific data as well as financial analysis and weather forecasting. One domain where such models are important is high-energy physics~\cite{hep2024}, given complexity and large dimensionality of distributions. 

\section{Framework}

The proposed algorithmic framework combines specific choices for embedding, measurement adaptation circuit, basis transform for the sampling stage, and a bespoke training procedure. We are detailing these ingredients in the following subsections.

\subsection{Model design}

At the circuit level, our construction combines elements of the differentiable quantum generative model (DQGM)~\cite{kyriienko2024}, forrelation circuits~\cite{Aaronson2018,Kasture2023,Umeano2026forrelation}, and circuit families with classically tractable quantum amplitudes~\cite{bravyi2021}. A Fourier feature map $U_{\phi}(x)$ (see Eq.~\eqref{eq:encoding}) encodes the data using a Hadamard layer followed by a diagonal unitary~\cite{kyriienko2024}, which mirrors the structure of forrelation circuits. This allows to establish a relationship between implicit and explicit probabilistic models with a fixed basis transformation, and query models during training at selected points $\mathcal{X}$.

We design the model with a bipartite architecture with a natural latent-variable interpretation. The $n$ visible qubits encode the data, and the $m$ hidden qubits mediate correlations between them through pairwise interactions, analogous to a restricted Boltzmann machine~\cite{Hinton2012}. The interactions remain diagonal, and the hidden qubits extend expressivity without introducing direct couplings among the visible qubits.

The bipartite restriction also determines the classical cost of the model. Forrelation-type amplitudes are hard to approximate in general, yet confining the interactions to visible-hidden pairs decouples the visible qubits once the hidden configuration is fixed. The amplitude then factorizes over the visible qubits, so at a fixed hidden configuration it can be evaluated classically in polynomial time, while the circuit retains a nontrivial structure.

The model unitary \( U_\mathcal{M}(\boldsymbol{\theta}) \) acts on \( n \) visible qubits and \( m \) hidden qubits, with \( N = n + m \) total qubits. It follows a forrelation-type structure with two diagonal layers separated by a fixed intermediate transformation,
\begin{equation*}
\begin{split}
U_\mathcal{M}(\boldsymbol{\theta}) = \bigl(I^{\otimes n} \otimes H^{\otimes m} \bigr) \, D(\boldsymbol{\theta}^{(2)})\,
    &\bigl(\mathrm{RY}(-\pi / 2)^{\otimes n} \otimes I^{\otimes m}\bigr) \\
&\times D(\boldsymbol{\theta}^{(1)})\, H^{\otimes N} ,
\end{split}
\end{equation*}
where \( \boldsymbol{\theta} = \{\boldsymbol{\theta}^{(1)}, \boldsymbol{\theta}^{(2)} \} \) denotes the set of trainable parameters. The diagonal layers are restricted to bipartite interactions between visible and hidden qubits.

\begin{definition}[Bipartite diagonal layer]
\label{def:diagonal}
The diagonal unitary can be written compactly using Pauli strings as
\begin{equation}
    D(\boldsymbol{\alpha}) 
    = \prod_{s \in \mathcal{S}} e^{-i \alpha_s Z(s)}, \nonumber
\end{equation}
where \(Z(s) = \bigotimes_{j=1}^N Z^{s_j}\). The set \(\mathcal{S}\) is restricted to strings of weight at most two, with the additional constraint that all weight-two terms connect a visible and a hidden qubit.
\end{definition}

The construction extends straightforwardly to multivariate inputs. For a $k$-dimensional input $\boldsymbol{x} = (x_1,\ldots,x_k)$, each feature $x_l$ is encoded on a dedicated block of $d$ visible qubits, yielding a total of $n = k d$ visible qubits. The hidden qubits couple to all feature blocks, allowing the model to capture cross-feature correlations. The encoding is therefore given by the tensor product \(\bigotimes_{l=1}^k U_\phi(x_l)\), while the model unitary acts jointly on all visible and hidden qubits.

The corresponding training and sampling circuits are illustrated in Fig.~\ref{fig:workflow}. In the sampling mode, the inverse QFT decomposes across feature blocks, so that a single computational basis measurement yields a joint sample \((\hat{x}_1,\ldots,\hat{x}_k)\).

The training signal is the marginal probability of the all-zero visible outcome, obtained by summing over the hidden register,
\begin{equation}
    \mathbb{P}_{\boldsymbol{\theta}}(0^n \mid x)
    = \sum_{\mathbf{b}_h \in \{0,1\}^m}
    \left| \bra{0^{n},\mathbf{b}_h}\, U_\mathcal{M}^\dagger(\boldsymbol{\theta})\, U_\phi(x)\, \ket{0^{N}} \right|^2 ,
\label{eq:marginal_single}
\end{equation}
where $U_\phi(x)$ encodes the input and the sum runs over all hidden strings $\mathbf{b}_h$. For each input $x$, the training circuit of Fig.~\ref{fig:workflow} returns this probability, which we take as the likelihood the model assigns to $x$. We fit the parameters by minimizing the empirical negative log-likelihood (NLL) over the training set $\mathcal{X}_\mathrm{train}$,
\begin{equation}
    \mathrm{NLL}(\boldsymbol{\theta}) = -\,\mathbb{E}_{x \in \mathcal{X}_\mathrm{train}}
    \left[ \log \mathbb{P}_{\boldsymbol{\theta}}(0^n \mid x) \right],
\label{eq:nll}
\end{equation}
which raises $\mathbb{P}_{\boldsymbol{\theta}}(0^n \mid x)$ on the training data and matches the model distribution to the target.

The cost of training lies entirely in evaluating the marginal $\mathbb{P}_{\boldsymbol{\theta}}(0^n \mid x)$. The sum in Eq.~\eqref{eq:marginal_single} runs over all $2^m$ hidden configurations, so a direct evaluation costs time exponential in $m$, while a MC estimate over the signed amplitudes suffers from the sign problem. In the next section we develop a classical estimator that evaluates this marginal in time linear in the number of visible qubits, combining the factorization over visible qubits with Parseval's identity.

We provide preliminaries in App.~\ref{app:background} and additional details regarding the model construction, intuition behind the choices and implementaion details in App.~\ref{app:model-details}.

\subsection{Classical training algorithm}
\label{sec:training}

Training requires the marginal $\mathbb{P}_{\boldsymbol{\theta}}(0^n\mid x)$ at every data point, which Eq.~\eqref{eq:marginal_single} writes as a sum over $2^m$ hidden configurations. A direct evaluation of this sum costs time exponential in $m$, and a Monte Carlo estimate over the signed amplitudes suffers from the sign problem. We remove both obstacles by reformulating the marginal as the average of a non-negative quantity over uniformly random hidden configurations. We first construct a function $g(\boldsymbol{\omega})$ of the hidden configuration that is evaluated in time $O(n\,d_{\max})$, and then show that the marginal equals the uniform average of $|g(\boldsymbol{\omega})|^2$.

We build $g$ from the amplitude ratio of the two diagonal layers of $U_\mathcal{M}$. Splitting $U_\mathcal{M}$ at its center produces two states whose basis amplitudes we denote $A(\mathbf{a},\boldsymbol{\omega})$ and $B(\mathbf{a},\boldsymbol{\omega})$, where $\mathbf{a}\in\{0,1\}^n$ and $\boldsymbol{\omega}\in\{0,1\}^m$ are the visible and hidden parts of a computational-basis string. The second layer rephases each basis state, so $|B|$ is constant and the ratio $\overline{A}/\overline{B}$ is well defined. Because $U_\mathcal{M}$ couples each visible qubit only to its hidden neighbours $\mathcal{N}(v)$ and never to another visible qubit, this ratio factorizes into per-qubit ratios $r_v(a_v;\boldsymbol{\omega})$, each depending on a single visible bit $a_v$ and on the hidden neighbours of $v$. These ratios obey $|r_v|\le\sqrt{2}$, which bounds the variance of the estimator below. We give the explicit local amplitudes in App.~\ref{app:derivation}.

\begin{restatable}[Bipartite factorization]{lemma}{factlemma}
\label{lem:factorization}
Let $g(\boldsymbol{\omega})$ be the uniform average of the amplitude ratio over the visible register. Since $U_\mathcal{M}$ has no visible-to-visible coupling, this average factorizes,
\begin{equation}
    g(\boldsymbol{\omega}) = \prod_{v=1}^{n} \frac{r_v(0;\boldsymbol{\omega}) + r_v(1;\boldsymbol{\omega})}{2},
    \label{eq:gfact}
\end{equation}
and $g(\boldsymbol{\omega})$ is evaluated in $O(n\,d_{\max})$, with $d_{\max}=\max_v|\mathcal{N}(v)|$.
\end{restatable}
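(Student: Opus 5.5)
Since the lemma concerns the uniform average of a product of local factors, I would first make the splitting of $U_\mathcal{M}$ explicit and compute the two half-circuit amplitudes in the computational basis. Write the training amplitude $\bra{0^n,\mathbf{b}_h} U_\mathcal{M}^\dagger U_\phi(x)\ket{0^N}$ as an inner product $\langle \Psi_B | \Psi_A\rangle$. Here $\ket{\Psi_A}$ collects $U_\phi(x)$ together with the first half of $U_\mathcal{M}^\dagger$, and $\ket{\Psi_B}$ collects the remaining half applied to $\ket{0^n,\mathbf{b}_h}$. The cut is placed across the $\mathrm{RY}(-\pi/2)^{\otimes n}$ layer, so that one side ends with the diagonal $D(\boldsymbol{\theta}^{(2)})$ acting on a product (Hadamard or basis) state. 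Then $B(\mathbf{a},\boldsymbol{\omega})$ is a pure phase times a constant modulus $2^{-N/2}$, because a diagonal unitary only rephases a uniform superposition. This justifies dividing by $\overline{B}$. Inserting a resolution of identity gives $\langle\Psi_B|\Psi_A\rangle=\sum_{\mathbf{a},\boldsymbol{\omega}}\overline{B}A$, and the normalization $|B|^2=2^{-N}$ turns this into a uniform average of $A/B$ (or its conjugate, as in the statement) over $(\mathbf{a},\boldsymbol{\omega})$. I would define $g(\boldsymbol{\omega})$ as the inner average over $\mathbf{a}$ with $\boldsymbol{\omega}$ fixed.

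The key step is to show that the ratio factorizes as $\prod_v r_v(a_v;\boldsymbol{\omega})$. By Definition~\ref{def:diagonal}, each $D(\boldsymbol{\alpha})$ is a product of $e^{-i\alpha_s Z(s)}$ with $s$ of weight one or of visible–hidden weight two. Evaluated on a basis state $(\mathbf{a},\boldsymbol{\omega})$, its phase is $\exp(-i\sum_s \alpha_s (-1)^{s\cdot(\mathbf{a},\boldsymbol{\omega})})$. The exponent splits into a pure-hidden part, which cancels in the ratio or is absorbed into an $\boldsymbol{\omega}$-dependent prefactor, plus $\sum_v f_v(a_v,\boldsymbol{\omega}_{\mathcal{N}(v)})$, since there are no visible–visible terms. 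On the $A$ side the non-diagonal pieces ($U_\phi(x)$, whose Hadamard-plus-diagonal structure is a tensor product over visible qubits, and the single-qubit $\mathrm{RY}$ layer) act qubitwise on the visible register. Conditioned on $\boldsymbol{\omega}$, the visible part of $A$ is therefore a product state. Each local factor is obtained by contracting a single-qubit gate with the phase $e^{-i f_v(\cdot,\boldsymbol{\omega})}$, which gives the local amplitudes $r_v$. Any global $\boldsymbol{\omega}$-only factor should be absorbed into one of the $r_v$, or shown to have unit modulus and to drop out.

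With the product form in hand, the average over $\mathbf{a}\in\{0,1\}^n$ of $\prod_v r_v(a_v;\boldsymbol{\omega})$ equals $\prod_v \tfrac12\sum_{a_v} r_v(a_v;\boldsymbol{\omega})$ by independence of the coordinates, which is \eqref{eq:gfact}. For the complexity, each $r_v(a;\boldsymbol{\omega})$ requires summing $|\mathcal{N}(v)|\le d_{\max}$ coupling angles plus $O(1)$ local terms and a single-qubit contraction. This costs $O(d_{\max})$ per visible qubit, and the product over $v$ costs $O(n)$, for a total of $O(n\,d_{\max})$.

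The main obstacle I expect is the bookkeeping of where exactly the circuit is cut. The cut must be chosen so that the hidden Hadamards and $\mathrm{RY}$ layers leave $|B|$ constant while keeping $A$ a visible product state at fixed $\boldsymbol{\omega}$. In particular, the hidden register carries Hadamards on both sides of $D(\boldsymbol{\theta}^{(1)})$ and $D(\boldsymbol{\theta}^{(2)})$. I would first try treating $\boldsymbol{\omega}$ as the hidden basis index between the two diagonal layers, so that the hidden Hadamards contribute only $\pm2^{-1/2}$-modulus factors that can be folded into the $r_v$ or into a common sign. I would then check that no residual sum over a second hidden index survives.
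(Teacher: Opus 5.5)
Your proposal is correct and is essentially the paper's argument. You cut at the visible $\mathrm{RY}$ layer so that one half is a single diagonal layer on a Hadamard state (flat modulus $2^{-N/2}$, hence a well-defined ratio); at fixed hidden string $\boldsymbol{\omega}$ the other half is a product over visible qubits, because the middle layer is $\mathrm{RY}^{\otimes n}\otimes I^{\otimes m}$ and the diagonals have no visible--visible terms; and you then average coordinatewise at cost $O(d_{\max})$ per visible qubit.

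The check you flag resolves at once: the hidden register sees only the identity between the two diagonal layers, which is the $\delta_{\boldsymbol{\omega},\mathbf{c}}$ in the paper's proof, so no second hidden index appears. Splitting at a fixed outcome $\mathbf{b}_h$ rather than at $|0^N\rangle$, as the paper does, only multiplies $g$ by the unit-modulus, $\mathbf{a}$-independent sign $(-1)^{\mathbf{b}_h\cdot\boldsymbol{\omega}}$, which is harmless.
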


Each factor in Eq.~\eqref{eq:gfact} is the average of $r_v$ over the two values of one visible bit, so the factorization replaces the average over the $2^n$ visible configurations by a product of $n$ local terms. Each term costs $O(d_{\max})$ through the product over the hidden neighbours of its visible qubit, so $g(\boldsymbol{\omega})$ is evaluated in $O(n\,d_{\max})$ with $d_{\max}\le m$. The factorization carries out the average over the visible register exactly, leaving only the hidden configuration $\boldsymbol{\omega}$ to be sampled.

\newpage
\begin{restatable}[Marginal estimator]{proposition}{estimatorprop}
\label{prop:estimator}
The training signal equals the uniform average of $|g|^2$ over the hidden register,
\begin{equation}
    \mathbb{P}_{\boldsymbol{\theta}}(0^n\mid x)
    = \mathbb{E}_{\boldsymbol{\omega}\sim\mathrm{Unif}(\{0,1\}^m)}\bigl[\,|g(\boldsymbol{\omega})|^2\,\bigr].
    \label{eq:estimator-identity}
\end{equation}
\end{restatable}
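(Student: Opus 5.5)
The plan is to write the marginal in Eq.~\eqref{eq:marginal_single} as a sum over hidden strings of squared amplitudes, split $U_\mathcal{M}$ at its center, and apply Parseval's identity on the hidden register. Concretely, write $U_\mathcal{M}^\dagger U_\phi(x)\ket{0^N}$ by inserting a resolution of the identity in the computational basis at the center of $U_\mathcal{M}$, namely between $D(\boldsymbol{\theta}^{(2)})$ and $\mathrm{RY}(-\pi/2)^{\otimes n}\otimes I^{\otimes m}$. This gives
\begin{equation*}
\bra{0^n,\mathbf{b}_h}U_\mathcal{M}^\dagger U_\phi(x)\ket{0^N}=\sum_{\mathbf{a},\boldsymbol{\omega}}\overline{B(\mathbf{a},\boldsymbol{\omega};\mathbf{b}_h)}\,A(\mathbf{a},\boldsymbol{\omega}),
\end{equation*}
where the dependence on $\mathbf{b}_h$ enters only through the final $H^{\otimes m}$ on the hidden register. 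Since $D(\boldsymbol{\theta}^{(2)})$ is diagonal and the visible outcome is fixed to $0^n$, the $B$ factor has the form $2^{-m/2}(-1)^{\boldsymbol{\omega}\cdot\mathbf{b}_h}e^{i\varphi(\mathbf{a},\boldsymbol{\omega})}$ times a constant of fixed modulus, up to the visible part of the basis change. This is where $|B|$ constant is used. I will fix the conventions of $A$ and $B$ from App.~\ref{app:derivation} so that the sum over $\mathbf{a}$ of $\overline{B}A$ equals $|B|^2\,2^n$ times the uniform average of $\overline{A}/\overline{B}$, i.e.\ a constant times $g(\boldsymbol{\omega})$ as defined in Lemma~\ref{lem:factorization}.

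After this step the amplitude reads $c\sum_{\boldsymbol{\omega}}(-1)^{\boldsymbol{\omega}\cdot\mathbf{b}_h}g(\boldsymbol{\omega})$, which is $c\,2^{m/2}$ times the Walsh--Hadamard transform $\hat g(\mathbf{b}_h)$ of $g$. Summing $|\cdot|^2$ over $\mathbf{b}_h$ and applying Parseval, $\sum_{\mathbf{b}_h}|\hat g(\mathbf{b}_h)|^2=\sum_{\boldsymbol{\omega}}|g(\boldsymbol{\omega})|^2$, turns the marginal into $|c|^2 2^{m}\sum_{\boldsymbol{\omega}}|g|^2$. The final step is to check that the normalizations from the Hadamards, the encoding $U_\phi$, the $\mathrm{RY}$ layer and $|B|$ combine so that $|c|^2 2^m=2^{-m}$. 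That yields $\mathbb{E}_{\boldsymbol{\omega}}|g|^2$. As a sanity check, trivial parameters should give a probability at most $1$ together with the bound $|r_v|\le\sqrt2$.

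The main obstacle is the bookkeeping of the split point. The visible register does not pass through a Hadamard at the end: it is measured directly after $\mathrm{RY}(-\pi/2)$ and $D(\boldsymbol{\theta}^{(1)})$. So the visible sum over $\mathbf{a}$ must be absorbed into $A$ and $B$ in a way that leaves $|B|$ independent of $(\mathbf{a},\boldsymbol{\omega})$. Only then is the ratio well defined and the $\mathbf{a}$-average exactly $g$. I would first try putting the visible $\mathrm{RY}$ into the $B$ side, where its matrix elements to $\bra{0}$ all have modulus $1/\sqrt2$, and the encoding plus first layer into $A$. If the normalization then fails to match, I would instead renormalize $g$ by the constant $|B|$, consistent with the definition in App.~\ref{app:derivation}.
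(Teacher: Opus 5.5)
Your skeleton matches the paper's: resolve the identity, let $\mathbf{b}_h$ enter only through a Walsh--Hadamard kernel on the hidden register, identify the visible sum with $g$, apply Parseval, and require $|c|^2 2^m=2^{-m}$. But the step you flag as the main obstacle fails as you set it up, because the circuit is misread.

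In $U_\mathcal{M}^\dagger U_\phi(x)\ket{0^N}$ the hidden-only layer $I^{\otimes n}\otimes H^{\otimes m}$ acts \emph{first}, on $\ket{0^m}$, where it merges with the encoding Hadamards. The \emph{last} gate is $H^{\otimes N}$ on all $N$ qubits, the adjoint of the opening layer of $U_\mathcal{M}$. It is immediately preceded by $D(\boldsymbol{\theta}^{(1)})^\dagger$ and, before that, by $\mathrm{RY}(\pi/2)^{\otimes n}$. So the visible register does pass through a final Hadamard, and $\mathbf{b}_h$ enters through that full layer:
\[
\bra{0^n,\mathbf{b}_h}H^{\otimes N}=2^{-N/2}\sum_{\mathbf{a},\boldsymbol{\omega}}(-1)^{\mathbf{b}_h\cdot\boldsymbol{\omega}}\bra{\mathbf{a},\boldsymbol{\omega}}.
\]
This phase-free uniform sum over $\mathbf{a}$ is the missing idea. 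It is what produces the unweighted visible average that defines $g$, and it dictates where to split.

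With your preferred split, with $\mathrm{RY}$ on the $B$ side, each visible factor of $B$ is $\bra{0}H\,\mathrm{diag}(d_0,d_1)\,\mathrm{RY}(\pi/2)\ket{a_v}=\tfrac12(d_1\pm d_0)$. Here $d_0,d_1$ are the $\boldsymbol{\omega}$-dependent phases of $D(\boldsymbol{\theta}^{(1)})^\dagger$ on that qubit. The modulus of this factor depends on $a_v$ and can vanish. Then $\overline{A}/\overline{B}$ is not even defined, and the identity $\sum_{\mathbf{a}}\overline{B}A=|B|^2 2^n\,\overline{g}$ fails. Your fallback of rescaling by ``the constant $|B|$'' is unavailable, because $|B|$ is not constant.

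The fix is to split between $\mathrm{RY}(\pi/2)$ and $D_1=D(\boldsymbol{\theta}^{(1)})^\dagger$, putting the rotation on the $A$ side. Set $A=\bra{\mathbf{a},\boldsymbol{\omega}}(\mathrm{RY}(\pi/2)^{\otimes n}\otimes I)D_2H^{\otimes N}\ket{0^N}$ with $D_2=D(\boldsymbol{\theta}^{(2)})^\dagger D_\phi(x)$, and $B=2^{-N/2}\,\overline{(D_1)_{\mathbf{a},\boldsymbol{\omega}}}$, which is a pure phase of modulus $2^{-N/2}$. The amplitude is then $a(\mathbf{b}_h)=\sum_{\boldsymbol{\omega}}(-1)^{\mathbf{b}_h\cdot\boldsymbol{\omega}}F(\boldsymbol{\omega})$ with $F=\sum_{\mathbf{a}}\overline{B}A$, since the factors $2^{\pm N/2}$ cancel. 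Using $1/\overline{B}=2^N B$ gives $g=2^m\overline{F}$.

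So your deferred constant is $c=|B|^2 2^n=2^{-m}$, multiplying $\overline{g}$ rather than $g$, which is immaterial under $|\cdot|^2$. This is exactly where the normalization comes from. Parseval then yields $\mathbb{P}_{\boldsymbol{\theta}}(0^n\mid x)=2^m\sum_{\boldsymbol{\omega}}|F(\boldsymbol{\omega})|^2=2^{-m}\sum_{\boldsymbol{\omega}}|g(\boldsymbol{\omega})|^2$, which is the paper's argument.
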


The integrand $|g(\boldsymbol{\omega})|^2$ is non-negative, so the average is free of the sign problem that affects the underlying complex amplitudes. The identity follows from the final Hadamard layer of $U_\mathcal{M}$, which turns the hidden sum in Eq.~\eqref{eq:marginal_single} into a Walsh-Hadamard transform over the hidden register. Parseval's identity then equates the sum of squared amplitudes with the average of $|g(\boldsymbol{\omega})|^2$, as we derive in App.~\ref{app:derivation}.

\begin{corollary}[Monte Carlo estimator]
\label{cor:estimator}
For $K$ independent samples $\boldsymbol{\omega}^{(k)}\sim\mathrm{Unif}(\{0,1\}^m)$, the estimator
\begin{equation*}
    \widehat{\mathbb{P}}_{\boldsymbol{\theta}}(0^n\mid x)
    = \frac{1}{K}\sum_{k=1}^{K}\bigl|g(\boldsymbol{\omega}^{(k)})\bigr|^2
\end{equation*}
is unbiased, with variance $\mathrm{Var}_{\boldsymbol{\omega}}[|g|^2]/K$ and cost $O(K\,n\,d_{\max})$. Gradients follow by reverse-mode automatic differentiation through $g$.
\end{corollary}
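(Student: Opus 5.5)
The corollary follows directly from Proposition~\ref{prop:estimator} and Lemma~\ref{lem:factorization}. There are three claims to establish: unbiasedness, the variance formula, and the cost.

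\textbf{Unbiasedness.} I will use linearity of expectation. Each sample $\boldsymbol{\omega}^{(k)}$ is drawn from $\mathrm{Unif}(\{0,1\}^m)$, so each term $|g(\boldsymbol{\omega}^{(k)})|^2$ has expectation $\mathbb{E}_{\boldsymbol{\omega}}[|g(\boldsymbol{\omega})|^2]$. By Eq.~\eqref{eq:estimator-identity}, this equals $\mathbb{P}_{\boldsymbol{\theta}}(0^n\mid x)$. Averaging $K$ such terms preserves the expectation, so $\mathbb{E}\bigl[\widehat{\mathbb{P}}_{\boldsymbol{\theta}}(0^n\mid x)\bigr]=\mathbb{P}_{\boldsymbol{\theta}}(0^n\mid x)$.

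\textbf{Variance.} The samples are independent, so the variance of the sum is the sum of the variances. Each variance equals $\mathrm{Var}_{\boldsymbol{\omega}}[|g|^2]$, and the prefactor $1/K^2$ gives $\mathrm{Var}_{\boldsymbol{\omega}}[|g|^2]/K$. I will also check that this variance is finite. From $|r_v|\le\sqrt2$, each factor in Eq.~\eqref{eq:gfact} has modulus at most $\sqrt2$, so $|g|^2\le 2^n$. This crude bound is enough to guarantee finiteness, though it is not needed for the stated formula.

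\textbf{Cost and gradients.} By Lemma~\ref{lem:factorization}, each evaluation of $g$ costs $O(n\,d_{\max})$. Drawing a uniform $\boldsymbol{\omega}$ costs $O(m)$, and I will argue this is dominated. It can be absorbed by evaluating each $r_v$ only on the neighbours $\mathcal{N}(v)$. Alternatively, it is subsumed under the implicit assumption that every hidden qubit has at least one visible neighbour, so that $m\le n\,d_{\max}$. Summing over $K$ samples gives $O(K\,n\,d_{\max})$.

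The gradient statement holds because $g$ is a finite composition of differentiable operations in $\boldsymbol{\theta}$: exponentials of the phases and products. Reverse-mode differentiation therefore costs a constant multiple of the forward pass. Moreover, $\nabla_{\boldsymbol{\theta}}$ commutes with the finite average for fixed samples, so the gradient of $\widehat{\mathbb{P}}$ is an unbiased estimate of $\nabla_{\boldsymbol{\theta}}\mathbb{P}_{\boldsymbol{\theta}}(0^n\mid x)$.

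The only subtle point is bookkeeping the $O(m)$ sampling overhead against $n\,d_{\max}$; everything else is immediate.
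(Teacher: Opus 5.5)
Your proposal is correct and follows essentially the same reasoning the paper relies on. The paper gives no separate proof of the corollary and treats it as immediate: unbiasedness from Proposition~\ref{prop:estimator} by linearity of expectation, the $\mathrm{Var}_{\boldsymbol{\omega}}[|g|^2]/K$ formula from independence of the samples, and the $O(K\,n\,d_{\max})$ cost from applying Lemma~\ref{lem:factorization} $K$ times; your remarks on the $O(m)$ sampling overhead and on the unbiasedness of the gradient of $\widehat{\mathbb{P}}_{\boldsymbol{\theta}}$ are harmless refinements.
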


The number of samples needed for a target relative accuracy depends on the spread of $|g(\boldsymbol{\omega})|^2$ and on the magnitude of $\mathbb{P}_{\boldsymbol{\theta}}(0^n\mid x)$, not on $m$.

The construction of $g$ over the visible register follows the classical algorithm of Bravyi et~al.~\cite{bravyi2021} for bipartite forrelation, which builds on the Monte Carlo overlap method of Van den Nest~\cite{vandennest2010}. That algorithm estimates a single forrelation overlap by importance sampling, with variance bounded by a constant. The training signal we require is instead a marginal probability, a sum of squared amplitudes over the hidden register. Estimating this sum through the same amplitude ratio averages signed contributions and reintroduces cancellations, which Eq.~\eqref{eq:estimator-identity} avoids by averaging the non-negative $|g(\boldsymbol{\omega})|^2$ under uniform sampling.

\begin{figure*}[!t]
\centering
\begin{minipage}[b]{0.48\textwidth}
\includegraphics[width=\linewidth]{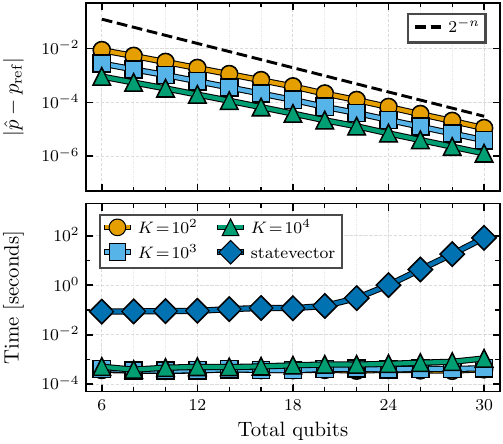}\\[2pt]
{\small (a) Estimator accuracy}
\end{minipage}
\hfill
\begin{minipage}[b]{0.48\textwidth}
\includegraphics[width=\linewidth]{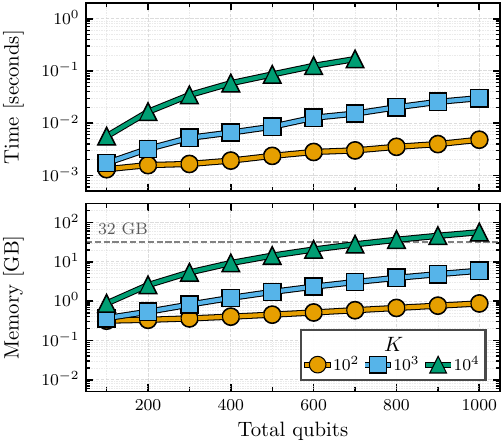}\\[2pt]
{\small (b) Scaling performance}
\end{minipage}
\caption{
\textbf{Accuracy and scaling of the Monte Carlo estimator for estimating the cost function.}
\textbf{(a)} The top panel shows the mean absolute error $|\hat{p}-p_{\mathrm{ref}}|$ for estimating $\hat{p} = \widehat{\mathbb{P}}_{\boldsymbol{\theta}}(0^n\mid x)$, with $p_{\mathrm{ref}}$ obtained from exact statevector simulation.
The dashed line marks the natural probability scale $2^{-n}$.
The bottom panel shows the corresponding wall-clock evaluation time, compared with exact statevector evaluation.
For each total system size and each Monte Carlo (MC) sample count $K$, results are averaged over $100$ independently initialized training circuits, with $100$ independent MC repetitions per circuit.
\textbf{(b)} The top panel shows the wall-clock time for a single evaluation of the visible marginal probability beyond the statevector-accessible regime.
The bottom panel shows a conservative peak-memory estimate based on the leading $O(Knm)$ storage cost of the estimator.
The dashed horizontal line indicates a 32 GB memory budget.
In both panels we set $n=m$, so that the total number of qubits $n+m$ is $2n$.
}
\label{fig:estimator-numerics}
\end{figure*}


\section{Numerical results}
\label{sec:numerics}

We first validate the MC estimator that provides the computational backbone of our training procedure.
The loss function requires visible marginal probabilities after tracing out the hidden register, but exact statevector evaluation of these quantities is limited to systems with only a few tens of qubits.
The estimator must therefore be accurate in the regime where exact comparison is possible and scalable beyond the regime accessible to exact simulation.
Accuracy of the estimator and its scalability are evaluated in the following, before discussing the model training.
The remaining results assess the trained generative model on univariate, bivariate and multivariate target distributions, supported by hardware sampling demonstrations using circuits trained with the classical estimator-based procedure. We provide additional details regarding all numerical results in App.~\ref{app:numerics}

\subsection{Estimator accuracy and scaling}
\label{sec:numerics-estimator}

To test estimator accuracy, we compare the estimator with exact statevector simulation on randomly initialized training circuits.
The circuit structure is the same as in the training objective and the parameters are drawn from a zero-mean normal distribution with standard deviation $\pi$. 
This produces broadly distributed diagonal phases and provides a useful stress test, since the resulting visible marginal probabilities can be anti-concentrated.
We set $n=m$, so that the total number of qubits is $n+m$, and estimate $\mathbb{P}_{\boldsymbol{\theta}}(0^n\mid x=0)$, with the hidden register marginalized.
For each system size and MC sample count $K$, we generate $100$ independently initialized circuits.
For each such circuit, we repeat the MC estimation $100$ times with independent estimator seeds.
This separates circuit-to-circuit variability from the sampling variance of the estimator.
The error bars are small relative to the plotted variation across $n$. 
We omit them from the main figure for readability and report them in App.~\ref{app:estimator-numerics}.

Fig.~\ref{fig:estimator-numerics}\textcolor{red}{a} shows that the estimator remains accurate throughout the statevector-accessible regime.
The mean absolute error $|\hat p-p_{\mathrm{ref}}|$ stays below the natural probability scale $2^{-n}$ for all tested sample counts $K$.
This is the relevant scale for likelihood evaluation near random initialization, where the probability mass is spread over many visible configurations.
The lower panel shows the corresponding runtime.
As expected, exact statevector evaluation grows rapidly with the total number of qubits, while the MC estimator remains nearly flat over the same range.
Reported timings exclude one-time JAX compilation costs and are measured after warm-up.

\begin{figure*}[!t]
\centering
\begin{minipage}[b]{0.48\textwidth}
\includegraphics[width=\linewidth]{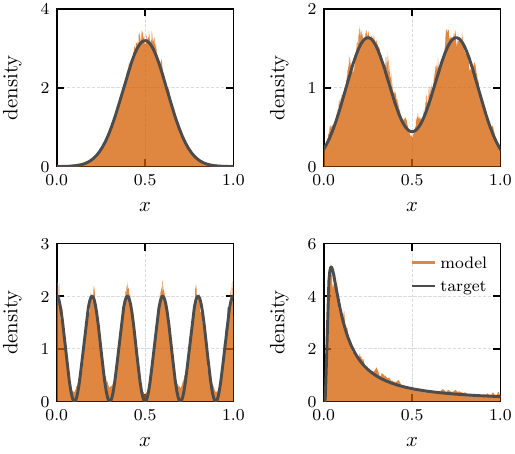}\\[2pt]
{\small (a) Simulated results}
\end{minipage}
\hfill
\begin{minipage}[b]{0.48\textwidth}
\includegraphics[width=\linewidth]{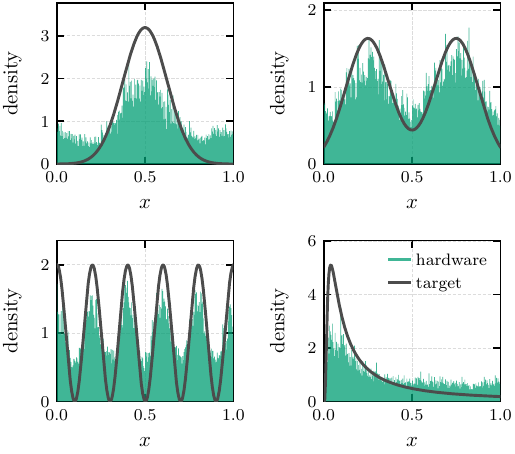}\\[2pt]
{\small (b) Hardware results}
\end{minipage}
\caption{\textbf{Learning and sampling from univariate benchmark distributions.}
\textbf{(a)} Learned model densities for four target distributions on the \texttt{float16} grid over $x\in[0,1]$, containing $2^{16}$ discrete points.
The orange filled curves show the exact trained model densities on the discrete support, and the dark curves show the target densities.
From top left to bottom right, the targets are a unimodal Gaussian, a two-component Gaussian mixture, an oscillatory cosine density, and a truncated L\'evy distribution.
All models use $m=4$ hidden qubits and are trained with the same initialization and hyperparameters.
Both the target discretization and the generated model distribution use \texttt{float16} precision.
Probabilities are rescaled by the bin width and displayed as densities on $[0,1]$.
\textbf{(b)} Sampling classically trained univariate models on quantum hardware.
Samples obtained from trained DQGM circuits executed on \texttt{ibm\_aachen}.
The dark curves show the target densities and the filled curves show hardware sample histograms rescaled as densities.
Each panel uses $10^5$ shots, $8$-bit visible precision, and $m=4$ hidden qubits.
The circuits are sampled without measurement-error mitigation, zero-noise extrapolation, dynamical decoupling, twirling, or postselection.
Hardware execution uses a gate-reduced sampling circuit, and App.~\ref{app:hardware} reports noiseless results for both the exact trained circuits and the gate-reduced circuits to separate circuit-approximation effects from hardware noise.}
\label{fig:univarite-numerics}
\end{figure*}

We next use the estimator outside the statevector-accessible regime.
Since exact reference values are no longer available, this experiment measures resource requirements rather than accuracy.
Fig.~\ref{fig:estimator-numerics}\textcolor{red}{b} shows the wall-clock time for a single evaluation of the same visible marginal probability up to $1000$ total qubits, again with $n=m$.
On a single GPU 32 GB, the runtime remains below one second for all tested values of $K$, including $K=10^4$.
The lower panel reports a conservative memory estimate based on the leading $O(Knm)$ storage cost of the estimator.
The measured memory use is lower in our implementation, but it includes backend-dependent overheads and is therefore reported separately in App.~\ref{app:estimator-numerics}.
Under this conservative estimate, $K=100$ and $K=1000$ remain below the 32 GB limit throughout the tested range, while $K=10,000$ reaches the limit near the largest system sizes.
Together, the two benchmarks show that the estimator agrees with exact simulation where such a comparison is possible and remains practical at system sizes far beyond statevector simulation.

\subsection{Learning univariate benchmark distributions}

Having established the accuracy and scaling of the estimator, we next use it for learning.
We begin with univariate benchmark distributions, where the learned density can be compared to the target on a high-resolution grid.
Fig.~\ref{fig:univarite-numerics}\textcolor{red}{a} shows four targets defined on a $2^{16}$-point discretization of $x\in[0,1]$.
The four targets are a localized Gaussian, a two-component Gaussian mixture, an oscillatory cosine density, and a truncated L\'evy distribution.
They are chosen to probe different features of a density, including localization, separated modes, oscillatory structure, and asymmetric heavy-tailed decay.

All runs use the same architecture with $m=4$ hidden qubits, the same initialization, the same hyperparameters, and the NLL objective defined in Eq.~\eqref{eq:nll}.
Each univariate training run completes in under a minute on a single GPU.
The target discretization and the generated model distribution are represented at \texttt{float16} precision.
The curves in Fig.~\ref{fig:univarite-numerics}\textcolor{red}{a} show the exact trained model distribution on the discrete support, rather than finite-sample histograms.
This setup tests whether a fixed training protocol can adapt to qualitatively different target distributions.

The trained densities reproduce the dominant structure of all four targets.
The model resolves the two modes of the Gaussian mixture, follows the oscillatory modulation of the cosine density, and captures the right-skewed tail of the heavy-tailed target.
Across the four runs, the final errors are Kullback-Leibler divergence $\mathrm{(KLD)}\in[0.0024,0.0313]$ and and total variation distance $\mathrm{(TVD)}\in[0.0206,0.0653]$, with the largest error occurring for the oscillatory target.
Per-target values and learning curves are reported in App.~\ref{app:univariate-details}.

\begin{figure*}[!t]
    \centering
    \includegraphics[width=\linewidth]{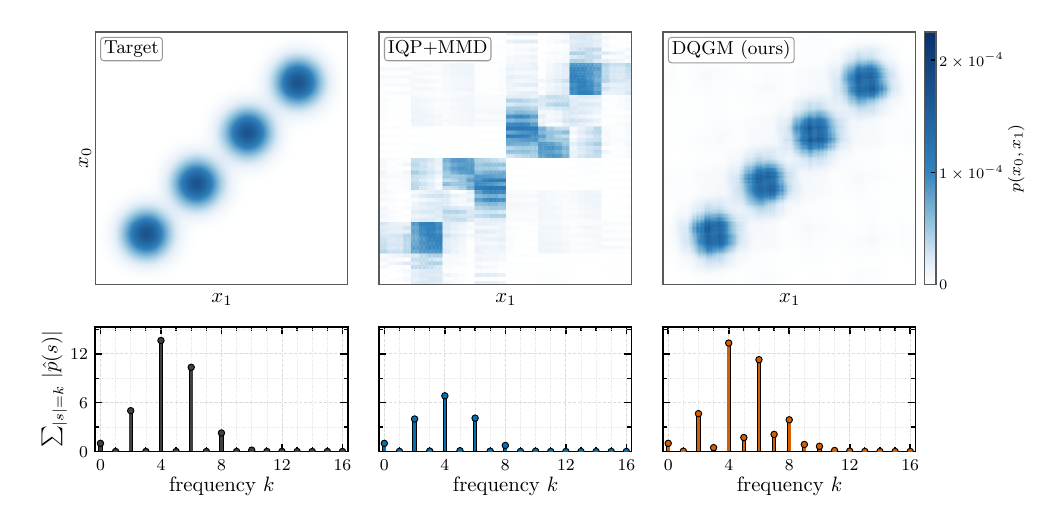}
    \caption{\textbf{Bivariate benchmark against an IQP+MMD baseline.}
    Comparison on a four-peak Gaussian target with an $8$-bit discretization for each coordinate.
    The top row shows the target distribution, the IQP+MMD baseline, and the trained DQGM.
    The IQP baseline contains all one-qubit $R_Z$ rotations and all two-qubit $R_{ZZ}$ rotations on the visible and hidden qubits ($m=4$) , giving $210$ trainable parameters.
    The DQGM uses $m=4$ hidden qubits and has $180$ trainable parameters.
    The IQP model is trained with a multi-kernel MMD objective (computed exactly) using bandwidths $\sigma\in\{0.01,0.1,1.0\}$.
    The bottom row shows the total absolute Walsh-Hadamard spectral weight grouped by Hamming weight, $\sum_{|s|=k}|\hat p(s)|$. For both models we plot the best performing seed.
    Quantitative metrics averaged over multiple seeds and training details are reported in App.~\ref{app:2d-iqp}}
    \label{fig:iqp-benchmark}
\end{figure*}

\subsection{Sampling on quantum hardware}

We finally test whether classically trained circuits can be sampled on present quantum hardware.
The circuits are trained on classical hardware as before, and only the final sampling step is executed on the quantum device.
With current devices, the main practical limitation is the quality of sampling, which is affected by  circuit depth, decoherence, gate errors, and readout errors.
We use the same family of univariate targets as in Fig.~\ref{fig:univarite-numerics}\textcolor{red}{a}, but reduce the visible precision to $8$ bits to lower the total gate count of the hardware circuits.
Each model again uses $m=4$ hidden qubits.

For hardware execution, we compile a gate-reduced version of the trained sampling circuit.
This uses an approximate inverse QFT and removes small-angle rotations to reduce the circuit depth.
We sample each model on \texttt{ibm\_aachen} with $10^5$ shots per target, with each run completing in $30$ seconds.
We report raw sample histograms, without any error mitigation or correction procedure, apart from the backend's native support for fractional gates, which is useful for the $\mathrm{RZ}$ and $\mathrm{R_{ZZ}}$ rotations appearing in the model. 
To separate algorithmic approximation from hardware noise, we report noiseless sampling results for both the exact trained circuits and the gate-reduced circuits used for hardware execution in App.~\ref{app:hardware}.

Fig.~\ref{fig:univarite-numerics}\textcolor{red}{b} shows that the hardware samples retain the dominant learned structures despite device noise.
The Gaussian targets remain localized around the expected regions, the oscillatory target preserves part of its modulation, and the L\'evy target retains its asymmetric support.
As expected, the agreement is not perfect without error mitigation.
These results indicate that, for the system sizes considered here, the main remaining bottleneck is not classical training, but accurate sampling from the trained circuit on tested quantum hardware.

\subsection{Bivariate benchmark distributions}

\begin{figure}[!t]
    \centering
    \includegraphics[width=\linewidth]{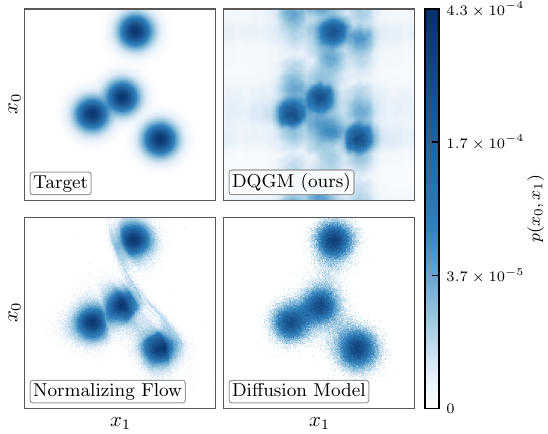}
    \caption{\textbf{Bivariate benchmark against classical generative baselines.}
    Comparison on a target distribution consisting of separated Gaussian peaks arranged along a spiral.
    All models use the same discrete data access model with an $8$-bit discretization for each coordinate.
    The panels show the target distribution, the trained DQGM, a normalizing-flow baseline, and a diffusion-model baseline. Distributions are plotted with a power-law color scaling. Quantitative metrics and baseline training details are reported in App.~\ref{app:2d-classical}}
    \label{fig:classical-benchmark}
\end{figure}

We next consider bivariate target distributions.
These experiments have two distinct roles.
The first benchmark, shown in Fig.~\ref{fig:iqp-benchmark}, compares against an IQP model trained with an MMD objective.
This is the closest point of comparison for our setting to the train-classical, deploy-quantum generative approach.
The second benchmark, shown in Fig.~\ref{fig:classical-benchmark}, compares against normalizing-flow and diffusion baselines on a deliberately difficult spiral target.
This comparison is not intended as an advantage claim over optimized classical models, but as a realistic and competitive test for the proposed model with respect to current generative baselines, under the same discrete data access model.

For the first benchmark, we train both models on the same four-peak Gaussian target with an $8$-bit discretization of each coordinate.
The DQGM uses $m=4$ hidden qubits and has $180$ trainable parameters.
For a fair comparison, we match the IQP+MMD baseline to this setting and sweep a range of configurations and perform a dedicated hyperparameter optimization that test the limits of the approach.
We vary the number of hidden qubits, the maximum gate order (up to four-qubit parametrized gates), and the way the MMD is estimated, either by MC, which is classically scalable, or exactly, which is accurate but prohibitive to assess at scale.
We report all of these configurations in App.~\ref{app:2d-iqp}.

Fig.~\ref{fig:iqp-benchmark} reports the configuration closest to the DQGM setting.
This IQP+MMD model uses $4$ hidden qubits and contains all one-qubit $R_\mathrm{Z}$ rotations and all two-qubit $R_{\mathrm{ZZ}}$ rotations, giving $210$ trainable parameters.
The parameter counts of the two models are therefore comparable, and both circuits use gates of at most two-qubits.
To avoid tuning the baseline to a single scale, we train the IQP model with a multi-kernel MMD objective using bandwidths $\sigma\in\{0.01,0.1,1.0\}$.
Although exact MMD estimation does not scale, we report the exactly trained result here to remove estimator accuracy from the comparison. 

Fig.~\ref{fig:iqp-benchmark} shows that the IQP+MMD baseline captures part of the coarse structure but produces block-like artifacts and does not reproduce the four localized peaks as accurately as the DQGM. We attribute this discrepancy to two limitations of the IQP+MMD approach.

First, IQP circuits are limited in expressivity~\cite{kurkin2025} and therefore require more gates and hidden qubits than the DQGM to represent the same distribution.
In App.~\ref{app:2d-iqp}, we show that the IQP+MMD model matches the DQGM only when it uses up to four-qubit gates and four hidden qubits and is trained exactly.
This model has $6195$ trainable parameters and fails to train with the MC estimator.

Second, the Walsh-Hadamard (WH) spectrum accounts for the remaining gap, for which we provide a primer in App.~\ref{app:mmd}.
The IQP+MMD approach assumes that the WH spectrum of the target distribution concentrates at low Hamming weight.
This assumption need not hold for multivariate, correlated distributions, as the four-peak target illustrates.
The IQP+MMD model therefore cannot match the WH spectrum at high orders.
As the lower panel of Fig.~\ref{fig:iqp-benchmark} shows, its spectrum matches the target up to Hamming weight $k=3$ and then deviates for larger $k$.
The DQGM does not rely on this assumption and matches all dominant frequencies, as evidenced by both the WH spectrum and the joint probability distributions.

The second bivariate benchmark, shown in Fig.~\ref{fig:classical-benchmark}, uses a target consisting of four separated Gaussian modes.
This target combines isolated modes with a non-Cartesian geometry, which makes it prone to oversmoothing or mode loss in generative models.
We compare against normalizing-flow and diffusion baselines trained under the same discrete data access model.

For each classical baseline we performed a dedicated hyperparameter optimization rather than adopting an off-the-shelf or toy configuration.
We optimized each baseline at three sizes and report the largest in Fig.~\ref{fig:classical-benchmark}, which for both baselines exceeds $10^4$ parameters.
The smaller models, together with quantitative metrics for all configurations, are reported in App.~\ref{app:2d-classical}.

Even with this tuning, none of the models cleanly represents the target.
The normalizing-flow baseline connects neighboring modes with spurious bridging density, and the diffusion baseline broadens and merges the adjacent modes.
The DQGM, however, better preserves the multimodal nature of the target distribution, maintaining a clear separation between the four modes and recovering its overall structure.
While residual errors and block-like artifacts are still visible, they do not compromise the model's ability to capture the essential features of the distribution.

\begin{figure}[!t]
    \centering
    \includegraphics[width=\linewidth]{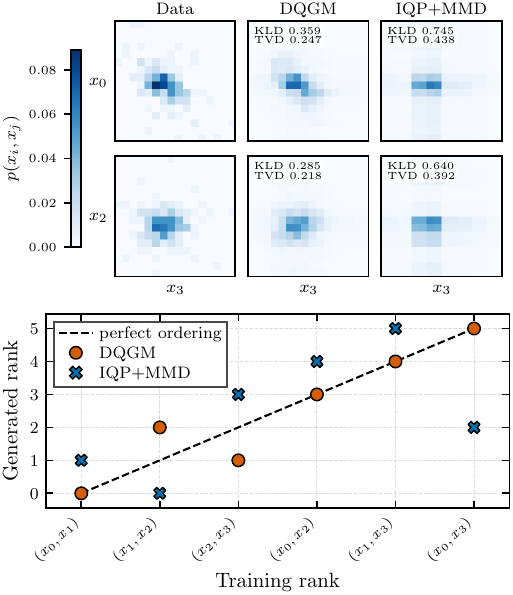}
    \caption{\textbf{Pairwise correlation structure on the four-feature finance dataset.}
    \textbf{(top)} Two-dimensional marginals $p(x_i, x_j)$ for the feature pairs $(x_0, x_3)$ and $(x_2, x_3)$, comparing the training data (left) with samples from the trained DQGM (middle) and the IQP+MMD baseline (right).
    Each model panel reports its Kullback-Leibler divergence (KLD) and total variation distance (TVD) to the corresponding two-feature projection, and all panels share the colour scale shown on the left. All pairs are reported in App.~\ref{app:supp-numerics}.
    \textbf{(bottom)} Rank preservation of the pairwise Pearson correlations.
    Feature pairs are ordered on the horizontal axis by their rank in the training data, with rank $0$ the strongest correlation, and the vertical axis gives the rank each model assigns to the same pair.
    Points lie on the dashed diagonal when a model reproduces the empirical ordering.
    The DQGM recovers the training ordering closely, while the IQP+MMD baseline departs from it for all of the pairs.}
    \label{fig:exp5b-maintext}
\end{figure}

\subsection{Multivariate benchmark distributions}

We next benchmark the model on multivariate data with correlations across features.
We use a four-feature finance dataset, encoded with $4$-bit precision per feature, so the joint distribution lives on a $16$-bit grid.
The dataset contains only $236$ samples, far fewer than the number of bins, so the empirical histogram is sparse.
In this low-data regime, the KLD and TVD summarize the fit to the sparse histogram but do not isolate the cross-feature correlations that carry the multivariate structure, so we compare the models through their pairwise correlations.
As shown in Fig.~\ref{fig:exp5b-maintext}, the DQGM recovers the ordering of all six pairwise correlations closely and attains lower KLD and TVD than the baseline, while the IQP+MMD model departs from the correlation ordering for several pairs.
The construction of the dataset and the full set of metrics are given in App.~\ref{app:supp-numerics}, which also reports a five-feature particle-physics dataset with consistent results.

Beyond the learning results we present here, we also show that the model can upsample, training at one visible precision and sampling at a higher one in App.~\ref{app:supp-numerics}. Training at $8$-bit precision and sampling at $10$ and $12$ bits, we reproduce the target with little loss of accuracy.
This allows training at lower precision and reduces the cost of classical training.

Across univariate, bivariate, and multivariate targets, our numerical study shows that likelihood-based training reconstructs the target distribution accurately.
The trained model is more faithful than the IQP+MMD baseline and remains competitive with tuned classical generative models.
The hardware runs confirm that these classically trained circuits retain their learned structure when sampled on present noisy quantum hardware, even without error mitigation.
Because we apply no mitigation or correction, the reported histograms represent a lower bound on the achievable sampling quality, and the residual gap to the target is set by hardware fidelity rather than by the training procedure.


\section{Discussion}
\label{sec:discussion}

In this work, we introduced a quantum Fourier generative model, adding a new piece in the \emph{train-on-classical deploy-on-quantum} paradigm. Trained with the negative log-likelihood and the MC estimator for visible marginal probability, we removed the exponential cost of marginalizing the hidden register, and performed training at large scale, far beyond exact statevector simulations. Our likelihood-based training reproduces the target distributions more faithfully than MMD loss. The MMD kernel fixes a low-frequency weighting of the Walsh--Hadamard spectrum, whereas the negative log-likelihood does not impose such a weighting, so the target alone sets which spectral components the loss resolves.

We note that our sampling circuit construction is closely related to IQP circuits.
Setting the per-feature precision to $d=1$ reduces the Fourier feature map and the inverse QFT to single Hadamard layers, and dropping the second diagonal layer recovers an IQP circuit.
An IQP-based QCBM is therefore a special case of our construction, and the $d$-bit, two-diagonal-layer model forms a strictly larger class.
Restricting the interactions to visible--hidden pairs, which is what makes the training signal classically tractable, need not limit expressivity, as bipartite latent-variable models such as restricted Boltzmann machines are universal approximators~\cite{Hinton2012}.
In our benchmarks, an IQP-based QCBM matches the same four-peak target only with more than thirty times as many parameters and non-scalable exact MMD training, and its low-order spectral bias otherwise leaves the high-order correlations unresolved (App.~\ref{app:2d-iqp}).

Unlike a QCBM, which uses a single circuit for both training and sampling, our construction uses two different circuits.
The training circuit yields a marginal probability that the MC estimator computes at cost $O(Knm)$, polynomial in the number of qubits, while the sampling circuit is motivated by sampling hardness, as in other train-on-classical deploy-on-quantum frameworks.
This separation makes training classically tractable but leaves open whether sampling is also classically hard.
Such hardness is established only for random instances of quantum circuits~\cite{RevModPhys.95.035001}, and likelihood training instead produces data-dependent parameters. 
Proving hardness for these parameters would be a strictly stronger statement than the anti-concentration arguments available for random ensembles, since they are fixed by the data rather than drawn at random.
The classical cost of sampling a trained circuit depends on both the target distribution and the model, not on the circuit family alone, so the same circuit can be easy to sample for one target and costly for another.
A sharper and more practical version of the question concerns wall-clock speed rather than asymptotic hardness.
A classical algorithm easily matches the roughly $300\,\mu\mathrm{s}$ per sample observed on hardware at small system sizes reached so far, and sustaining this rate at the larger, classically-hard system sizes would place quantum sampling in a competitive regime that lower-depth sampling circuits and improved hardware could bring within reach.

Where variational quantum models face a theoretical limit on what can be trained, our approach faces only a practical one on what can be sampled, set by present hardware rather than by principle. 
Classical generative modelling, by contrast, advanced through extensive empirical iteration on design choices such as connectivity, dropout, and architecture, each judged by the quality of the resulting samples. 
Our model has a comparable design space, including how hidden units connect features across frequency ranges and how the interaction graph is sparsified, and this space can be explored entirely during classical training. 
Evaluating this sample quality, however, requires running the trained circuit on hardware, which we cannot sample accurately beyond roughly $30$ qubits today, so empirical tuning of these choices is out of reach until reliable sampling at scale becomes available.
Error mitigation does not close this gap, since standard approaches such as zero-noise extrapolation and probabilistic error cancellation target expectation values, and no comparably general method exists for Born-rule sampling, where the outputs are individual samples rather than a scalar estimate. Extending error mitigation from expectation values to individual samples is an interesting avenue for future work.

As a generative model, the approach is already competitive with tuned classical baselines such as normalizing flows and diffusion models under the same discrete data access, as shown in Sec.~\ref{sec:numerics}.
What remains is sampling it at scale.
Superconducting processors with more than a thousand of qubits already exist, but their noise prevents reliable sampling at that scale.
Reducing the depth of the deployed circuit helps.
The approximate inverse QFT and gate truncation we already use preserve the learned structure while reducing depth, which shallower QFT constructions~\cite{aumann2026} could push further.
In the early fault-tolerant era, where logical error rates fall well below those of present hardware, the same classically trained circuits could be sampled with enough fidelity to resolve the learned distribution and to reopen the empirical loop that current hardware closes off.

We release our implementation at Ref.~\cite{code-repo} and the data produced during this study at Ref.~\cite{data-repo}.

\section{Methods}
\label{sec:methods}

Here, we briefly describe the notions and methods we use throughout the manuscript and refer the reader to App.~\ref{app:background} for a self-contained treatment.

\subsection{Differentiable quantum generative model}
\label{sec:dqgm}

The differentiable quantum generative model (DQGM)~\cite{kyriienko2024} is a quantum generative framework that represents probability distributions via parameterized quantum states. Training is performed by maximizing the overlap between model states and data-dependent quantum embeddings, while sampling is achieved by applying an inverse transform that maps latent states to a discrete output domain. The model therefore separates representation learning in a latent Hilbert space from the choice of sampling transform.

The DQGM operates in two modes. Training uses a data-dependent quantum circuit whose output probability at a fixed reference state $\ket{0^n}$ provides the learning signal. Sampling applies an inverse transform to the trained model state to produce discrete samples.

Let $p(x)$ be a probability distribution over a real-valued scalar $x$, and let $\mathcal{X}_\mathrm{train}$ be a dataset of $M$ i.i.d.\ samples.
Each sample is discretized onto an $n$-bit grid, yielding $x \in \{0, \dots, 2^n - 1\}$.
The Fourier feature map encodes $x$ into an $n$-qubit state via the unitary
\begin{equation}
    U_{\phi}(x) = \prod_{j=1}^n \left( \mathrm{RZ}_j \!\left( \frac{2\pi x}{2^{\,j}}\right) H_j \right),
\label{eq:encoding}
\end{equation}
where $\mathrm{RZ}_j(\varphi) = \exp(-i\varphi Z_j / 2)$ is a $z$-rotation on qubit $j$ and $H_j$ is the Hadamard gate acting on the same qubit.
Applied to the initial state $\ket{0^n}$, the encoding produces the latent state $\ket{\tilde{x}} = U_\phi(x)\ket{0^n}$.
This parametrization corresponds to the binary phase encoding underlying the quantum Fourier transform.

In this work, we choose the tranform $\phi$ to be the Fourier transform such that, for data represented on the $n$-bit discretization grid, the inverse quantum Fourier transform (QFT)~\cite{nielsen_chuang_2010} maps each encoded state $\ket{\tilde{x}}$ to the corresponding computational basis state $\ket{x}$. The sampling circuit is therefore
\begin{equation}
    \ket{\psi_{\boldsymbol{\theta}}} = \mathrm{QFT}^\dagger\, U_\mathcal{M}(\boldsymbol{\theta})\, \ket{0^n},
\label{eq:sampling}
\end{equation}
where $U_\mathcal{M}(\boldsymbol{\theta})$ denotes the variational model unitary and a computational basis measurement of $\ket{\psi_{\boldsymbol{\theta}}}$ yields samples distributed as $p_{\boldsymbol{\theta}}(x) = \bigl\lvert \langle x | \psi_{\boldsymbol{\theta}} \rangle \bigr\rvert^2$. This defines a model distribution over the discretized domain.

More generally, the transform defining $\phi$ is a design choice. Alternative transforms, such as the Chebyshev transform \cite{williams2023} or Hartley transform \cite{Wu2025}, can be used to construct a DQGM and may introduce inductive biases that are advantageous for learning specific classes of distributions, as discussed in Ref.~\cite{martinezdelejarza2025}.

\subsection{Classical and quantum separations}
\label{sec:iqp}

Instantaneous quantum polynomial (IQP) circuits~\cite{shepherd2009} are a restricted class of commuting quantum circuits and one of the most studied models for sampling-based quantum advantage~\cite{bremner2010, bremner2016, RevModPhys.95.035001}. 

\begin{definition}[IQP circuits~\cite{shepherd2009}]
\label{def:iqp}
An IQP circuit on $n$ qubits prepares
\begin{equation}
    \ket{\psi_\mathrm{IQP}} = H^{\otimes n}\Bigl(\prod_{s\in\mathcal{S}} e^{-i\theta_s Z(s)}\Bigr) H^{\otimes n}\ket{0^n},
\label{eq:iqp}
\end{equation}
where $\mathcal{S}\subseteq\{0,1\}^n$, $Z(s)=\bigotimes_{j=1}^n Z^{s_j}$ with $Z^0=I$ and $Z^1=Z$. The unitary consists of commuting diagonal operators in the computational basis, conjugated by Hadamard layers.
\end{definition}

The forrelation problem concerns studying structural similarity of IQP circuits, designed to showcase query-complexity separation of quantum and classical approaches~\cite{Aaronson2010, Aaronson2018, Umeano2026forrelation}. Both rely on alternating Hadamard layers and diagonal operators, and our construction builds on this structure.

\begin{definition}[Forrelation circuit]
\label{def:forrelation}
The forrelation circuit is
\begin{equation*}
    U_\mathrm{forr} = H^{\otimes n}\, D_g\, H^{\otimes n}\, D_f\, H^{\otimes n},
\end{equation*}
where $D_f$ and $D_g$ are diagonal unitaries with $D_f\ket{x} = f(x)\ket{x}$ and $D_g\ket{x} = g(x)\ket{x}$. Its amplitude reproduces the forrelation,
\begin{equation*}
    \bra{0^n} U_\mathrm{forr} \ket{0^n} = \Phi(f,g). 
\end{equation*}
\end{definition}
Bravyi et~al.~\cite{bravyi2021} showed that such amplitudes can be computed classically in time $O(n\,2^{w})$ up to polynomial factors when the interaction graphs of $f$ and $g$ have treewidth at most $w$. 
Motivated by this efficient classical evaluation under bounded treewidth, we designed our loss function through the same circuit structure, which we evaluate with a classical algorithm tailored to our design.

\begin{acknowledgments}
The authors acknowledge helpful suggestion from Sergey Bravyi, as well as fruitful discussions with Joseph Bowles and Zoltan Zimboras. C.T. and M.G. are supported by CERN through the CERN Quantum Technology Initiative. O.K. acknowledges the funding from UK EPSRC awards under the Agreements No. EP/Y005090/1 and No. EP/Z53318X/1 (QCi3 Hub). This work is part of the Quantum Computing for High-Energy Physics (QC4HEP) working group.
\end{acknowledgments}

\bibliographystyle{apsrev4-2}
\bibliography{main}

\onecolumngrid
\appendix

\newpage
\clearpage
\tableofcontents
\clearpage

\newpage
\section{Background}
\label{app:background}


\subsection{Differentiable quantum generative model}
\label{app:dqgm}

The differentiable quantum generative model (DQGM)~\cite{kyriienko2024} is a quantum generative framework that represents probability distributions via parameterized quantum states. Training is performed by maximizing the overlap between model states and data-dependent quantum embeddings, while sampling is achieved by applying an inverse transform that maps latent states to a discrete output domain. The model therefore separates representation learning in a latent Hilbert space from the choice of sampling transform.

The DQGM operates in two modes. Training uses a data-dependent quantum circuit whose output probability at a fixed reference state $\ket{0^n}$ provides the learning signal (also referred as explicit modelling \cite{Kasture2023}). Sampling applies an inverse transform to the trained model state to produce discrete samples. These circuits can be seen in Fig.~\ref{fig:circuits}.

\begin{figure}[!h]
\centering
\begin{minipage}[b]{0.48\textwidth}
\centering
\resizebox{\linewidth}{!}{%
\begin{quantikz}[column sep=0.25cm, row sep=0.15cm]
\lstick{$\ket{0}$} & \qwbundle{d} & \gate{U_\phi(x_1)} & \gate[4]{U_{\mathcal{M}}^\dagger(\boldsymbol{\theta})} & \meter[3]{} & \setwiretype{c} \rstick[wires=3]{$\mathbb{P}_{\boldsymbol{\theta}}(0^n \mid x)$}\\
\setwiretype{n} & & \raisebox{1.0ex}{$\vdots$} & \ghost{U_{\mathcal{M}}^\dagger} & & \setwiretype{c}\\
\lstick{$\ket{0}$} & \qwbundle{d} & \gate{U_\phi(x_k)} & \ghost{U_{\mathcal{M}}^\dagger} & & \setwiretype{c}\\
\lstick{$\ket{0}$} & \qwbundle{m} & & \ghost{U_{\mathcal{M}}^\dagger} & \ground{} \\
\end{quantikz}
}\\[2pt]
{\small (a) Training circuit}
\end{minipage}
\hfill
\begin{minipage}[b]{0.41\textwidth}
\centering
\resizebox{\linewidth}{!}{%
\begin{quantikz}[column sep=0.25cm, row sep=0.15cm]
\lstick{$\ket{0}$} & \qwbundle{d} & \gate[4]{U_{\mathcal{M}}(\boldsymbol{\theta})} & \gate{\mathrm{QFT}^\dagger} & \meter{} & \setwiretype{c} \rstick[]{$\sim \hat{x}_1$}\\
\setwiretype{n} & \raisebox{1.0ex}{$\vdots$} & \ghost{U_{\mathcal{M}}^\dagger} & \raisebox{1.0ex}{$\vdots$} & & \\
\lstick{$\ket{0}$} & \qwbundle{d} & \ghost{U_{\mathcal{M}}^\dagger} & \gate{\mathrm{QFT}^\dagger} & \meter{} & \setwiretype{c} \rstick[]{$\sim \hat{x}_k$}\\
\lstick{$\ket{0}$} & \qwbundle{m} & \ghost{U_{\mathcal{M}}^\dagger} & \ground{} \\
\end{quantikz}
}\\[2pt]
{\small (b) Sampling circuit}
\end{minipage}

\caption{\textbf{Training and sampling circuits.}
Both circuits act on $k$ features, each encoded with $d$-bit precision, for a total of $n=kd$ visible qubits, together with $m$ hidden qubits.
In the training circuit (a), all visible qubits are measured and the hidden register is marginalized to obtain $\mathbb{P}_{\boldsymbol{\theta}}(0^n \mid x)$.
In the sampling circuit (b), an inverse QFT is applied to each feature block and the hidden register is discarded.
Each computational-basis measurement then yields one sample from the learned distribution.}
\label{fig:circuits}
\end{figure}

Let $p(x)$ be a probability distribution over a real-valued scalar $x$, and let $\mathcal{X}_\mathrm{train}$ be a dataset of $M$ i.i.d.\ samples.
Each sample is discretized onto an $n$-bit grid, yielding $x \in \{0, \dots, 2^n - 1\}$.
The Fourier feature map encodes $x$ into an $n$ qubit state via the unitary
\begin{equation}
    U_{\phi}(x) = \prod_{j=1}^n \left( \mathrm{RZ}_j \!\left( \frac{2\pi x}{2^{\,j}}\right) H_j \right),
\label{eq:app:encoding}
\end{equation}
where $\mathrm{RZ}_j(\varphi) = \exp(-i\varphi Z_j / 2)$ is a $z$-rotation on qubit $j$ and $H_j$ is the Hadamard gate acting on the same qubit.
Applied to the initial state $\ket{0^n}$, the encoding produces the latent state $\ket{\tilde{x}} = U_\phi(x)\ket{0^n}$.
This parametrization corresponds to the binary phase encoding underlying the quantum Fourier transform.

Let $U_\mathcal{M}(\boldsymbol{\theta})$ denote the variational model unitary.
For a given data point $x$, we define the sample-wise objective as the squared overlap
\begin{equation}
    \mathcal{J}_{\boldsymbol{\theta}}(x) 
    = \bigl|\bra{0^n} U_\mathcal{M}^\dagger(\boldsymbol{\theta})\, U_\phi(x) \ket{0^n}\bigr|^2,
\end{equation}
which corresponds to the Born probability of measuring the all-zero outcome after applying the inverse model to the encoded data,
\begin{equation}
    \mathcal{J}_{\boldsymbol{\theta}}(x) = \mathbb{P}_{\boldsymbol{\theta}}(0^n \mid x).
\end{equation}
Equivalently, this can be interpreted as the overlap between the model state $\ket{\psi_{\boldsymbol{\theta}}} = U_\mathcal{M}(\boldsymbol{\theta})\ket{0^n}$ and the encoded data state $\ket{\tilde{x}}$.

The population objective is given by the expectation over the data distribution
\begin{equation}
    \mathcal{J}(\boldsymbol{\theta}) 
    = \mathbb{E}_{x \sim p(x)} \big[ \mathcal{J}_{\boldsymbol{\theta}}(x) \big]. 
\end{equation}
In practice, this expectation is approximated by the empirical average over the training dataset $\mathcal{X}_{\mathrm{train}}$,
\begin{equation}
    \bar{\mathcal{J}}(\boldsymbol{\theta}) 
    = \frac{1}{M} \sum_{x \in \mathcal{X}_{\mathrm{train}}} \mathcal{J}_{\boldsymbol{\theta}}(x).
\end{equation}

This objective can be interpreted as a likelihood surrogate in the latent space induced by the feature map $\phi$. More generally, the choice of loss function is a design decision and may be adapted to the task at hand. In all cases, the underlying goal is to maximize the overlap between the model state and the encoded data states.

Alternatively, one may consider the negative log-likelihood
\begin{equation}
    \mathcal{L}(\boldsymbol{\theta}) 
    = - \frac{1}{M} \sum_{x \in \mathcal{X}_{\mathrm{train}}} 
    \log \mathcal{J}_{\boldsymbol{\theta}}(x),
\end{equation}
which often yields more informative gradients and improved numerical stability.

In the present construction, the Fourier feature map is chosen such that, for data represented on the $n$-bit discretization grid, the inverse quantum Fourier transform (QFT)~\cite{nielsen_chuang_2010} maps each encoded state $\ket{\tilde{x}}$ to the corresponding computational basis state $\ket{x}$. The sampling circuit is therefore
\begin{equation}
    \ket{\psi_{\boldsymbol{\theta}}} = \mathrm{QFT}^\dagger\, U_\mathcal{M}(\boldsymbol{\theta})\, \ket{0^n},
\label{eq:app:sampling}
\end{equation}
and a computational basis measurement of $\ket{\psi_{\boldsymbol{\theta}}}$ yields samples distributed as $p_{\boldsymbol{\theta}}(x) = \bigl\lvert \langle x | \psi_{\boldsymbol{\theta}} \rangle \bigr\rvert^2$. This defines a model distribution over the discretized domain.

\subsection{IQP circuits and forrelation}
\label{app:iqp}

Instantaneous quantum polynomial (IQP) circuits~\cite{shepherd2009} are a restricted class of commuting quantum circuits and one of the most studied models for sampling-based quantum advantage~\cite{bremner2010, bremner2016, RevModPhys.95.035001}. The forrelation problem is a query-complexity separation whose quantum algorithm uses a structurally similar circuit~\cite{Aaronson2010, Aaronson2018}. Both rely on alternating Hadamard layers and diagonal operators, and our construction builds on this structure.

\begin{definition}[IQP circuits~\cite{shepherd2009}]
An IQP circuit on $n$ qubits prepares
\begin{equation}
    \ket{\psi_\mathrm{IQP}} = H^{\otimes n}\Bigl(\prod_{s\in\mathcal{S}} e^{-i\theta_s Z(s)}\Bigr) H^{\otimes n}\ket{0^n},
\label{eq:app:iqp}
\end{equation}
where $\mathcal{S}\subseteq\{0,1\}^n$, $Z(s)=\bigotimes_{j=1}^n Z^{s_j}$ with $Z^0=I$ and $Z^1=Z$. The unitary consists of commuting diagonal operators in the computational basis, conjugated by Hadamard layers.
\end{definition}

Sampling from IQP circuits is conjectured to be classically intractable under standard complexity-theoretic assumptions, even for approximate sampling~\cite{bremner2010, bremner2016}. Two properties support this conjecture, anti-concentration of the output distribution and the hardness of approximating output amplitudes.

\begin{definition}[Anti-concentration~\cite{RevModPhys.95.035001}]
Let $\mathcal{U}$ be an ensemble of quantum circuits on $n$ qubits. The ensemble $\mathcal{U}$ anti-concentrates if there exist constants $\alpha,\,\beta > 0$, independent of $n$, such that
\begin{equation}
    \Pr_{U \sim\, \mathcal{U}}\!\left[\, p_U(x) \ge \frac{\alpha}{2^n} \right] \ge \beta
\end{equation}
for every $x \in \{0,1\}^n$, where $p_U(x) = \bigl|\bra{x} U \ket{0^n}\bigr|^2$ and the probability is over the choice of circuit $U \sim \mathcal{U}$.
\end{definition}

Anti-concentration has been proven for circuit ensembles that form approximate unitary $2$-designs, including random IQP circuits and universal random circuits~\cite{RevModPhys.95.035001}. In these cases the second moment of the output probabilities stays close to its Haar value, which prevents the distribution from concentrating on a small set of outcomes.

While IQP hardness is phrased in terms of sampling, a structurally similar circuit underlies one of the earliest~\cite{Aaronson2010} and strongest~\cite{Aaronson2018} separations between classical and quantum query complexity, the forrelation problem. The problem takes two Boolean functions $f,g:\{0,1\}^n \to \{-1,1\}$ and their forrelation,
\begin{equation}
    \Phi(f,g) = \frac{1}{2^{3n/2}} \sum_{x,y \in \{0,1\}^n} f(x)\,(-1)^{x \cdot y}\,g(y), 
\end{equation}
which measures the correlation between $f$ and the Fourier transform of $g$. The task is to decide whether $\Phi(f,g) \ge 3/5$ or $|\Phi(f,g)| \le 1/100$, promised that one of these holds. Just $O(1)$ quantum queries solve this with bounded error, whereas any classical randomized algorithm requires $\Omega(\sqrt{2^n}/n)$ queries~\cite{Aaronson2018}.

\begin{definition}[Forrelation circuit]
The forrelation circuit is
\begin{equation}
    U_\mathrm{forr} = H^{\otimes n}\, D_g\, H^{\otimes n}\, D_f\, H^{\otimes n},
\label{eq:app:forrelation}
\end{equation}
where $D_f$ and $D_g$ are diagonal unitaries with $D_f\ket{x} = f(x)\ket{x}$ and $D_g\ket{x} = g(x)\ket{x}$. Its amplitude reproduces the forrelation,
\begin{equation}
    \bra{0^n} U_\mathrm{forr} \ket{0^n} = \Phi(f,g). 
\end{equation}
\end{definition}

As anticipated, $U_\mathrm{forr}$ has the same alternating Hadamard-and-diagonal structure as the IQP circuit of Eq.~\eqref{eq:iqp}. 

\subsection{MMD loss, Pauli-Z expectations, and the Walsh--Hadamard transform}
\label{app:mmd}

A quantum circuit Born machine (QCBM) encodes a probability distribution
$q_{\boldsymbol{\theta}}(x) = |\langle x | \psi_{\boldsymbol{\theta}}\rangle|^2$ over bitstrings $x \in \{0,1\}^n$ and is trained to reproduce a target distribution $p(x)$.
A standard loss for this task is the squared Maximum Mean Discrepancy (MMD)~\cite{rudolph2024, Recio-Armengol2025}
\begin{equation}
    \mathrm{MMD}^2(p, q_{\boldsymbol{\theta}})
    = \mathbb{E}_{x,x' \sim p}\bigl[k(x,x')\bigr]
    - 2\,\mathbb{E}_{\substack{x \sim p\\ y \sim q_{\boldsymbol{\theta}}}}\bigl[k(x,y)\bigr]
    + \mathbb{E}_{y,y' \sim q_{\boldsymbol{\theta}}}\bigl[k(y,y')\bigr],
    \label{eq:app:mmd}
\end{equation}
which vanishes if and only if $p = q_{\boldsymbol{\theta}}$ for a characteristic kernel $k$.
The kernel used for QCBMs is a Gaussian (radial basis function), often a mixture over
bandwidths~\cite{Liu2018, Coyle2020}. On bitstrings the squared Euclidean distance equals
the Hamming weight of $x \oplus y$, so the kernel is translation invariant on $\mathbb{Z}_2^n$,
\begin{equation}
    k(x,y) = \kappa(x \oplus y), \qquad
    \kappa(z) = \exp\!\left(-\tfrac{|z|}{2\sigma^2}\right),
    \label{eq:app:gaussian-kernel}
\end{equation}
where $|z|$ denotes the Hamming weight of $z$ and $\sigma$ is the bandwidth.

Functions on the Boolean cube admit a Walsh--Hadamard expansion in the characters
$\chi_s(x) = (-1)^{s \cdot x}$ indexed by $s \in \{0,1\}^n$, which form an orthogonal
basis~\cite{odonnell2014}. The Walsh--Hadamard transform of the target distribution and its
inverse are
\begin{equation}
    \hat p(s) = \sum_{x \in \{0,1\}^n} p(x)\,(-1)^{s \cdot x},
    \qquad
    p(x) = \frac{1}{2^n} \sum_{s \in \{0,1\}^n} \hat p(s)\,(-1)^{s \cdot x},
    \label{eq:app:wh}
\end{equation}
and the same transform applies to the model distribution $q_{\boldsymbol{\theta}}$ and to the kernel function $\kappa$. The values $\hat p(s)$ are the Walsh--Hadamard coefficients of $p$.

Each character is the eigenvalue function of a Pauli-Z string. The operator
$Z(s) = \bigotimes_{j=1}^n Z^{s_j}$ is diagonal in the computational basis with
$Z(s)\ket{x} = (-1)^{s \cdot x}\ket{x}$, so for any Born machine with state $\rho_{\boldsymbol{\theta}}$,
\begin{equation}
    \langle Z(s) \rangle_{q_{\boldsymbol{\theta}}}
    = \mathrm{Tr}\!\bigl[Z(s)\,\rho_{\boldsymbol{\theta}}\bigr]
    = \sum_{x \in \{0,1\}^n} q_{\boldsymbol{\theta}}(x)\,(-1)^{s \cdot x}
    = \hat q_{\boldsymbol{\theta}}(s).
    \label{eq:app:pauli-fourier}
\end{equation}
The Pauli-Z string expectation values of a QCBM are therefore the Walsh--Hadamard
coefficients of its distribution. The same identity holds for the target, $\langle Z(s)\rangle_p = \hat p(s)$.

To express the loss in this spectrum, write $\delta = p - q_{\boldsymbol{\theta}}$ and expand the squares in Eq.~\eqref{eq:app:mmd},
\begin{equation}
    \mathrm{MMD}^2(p, q_{\boldsymbol{\theta}})
    = \sum_{x,y \in \{0,1\}^n} \kappa(x \oplus y)\,\delta(x)\,\delta(y).
    \label{eq:app:mmd-bilinear}
\end{equation}
Substituting the inverse transform of $\kappa$ from Eq.~\eqref{eq:app:wh} and collecting the character sum into $\hat p(s) - \hat q_{\boldsymbol{\theta}}(s)$ via Eq.~\eqref{eq:app:pauli-fourier} diagonalizes
the loss over Pauli modes,
\begin{equation}
    \mathrm{MMD}^2(p, q_{\boldsymbol{\theta}})
    = \frac{1}{2^n} \sum_{s \in \{0,1\}^n} \hat\kappa(s)\,
      \bigl(\hat p(s) - \hat q_{\boldsymbol{\theta}}(s)\bigr)^2.
    \label{eq:app:mmd-fourier}
\end{equation}
The Gaussian kernel of Eq.~\eqref{eq:app:gaussian-kernel} factorizes as
$\kappa(z) = \prod_{j=1}^n a^{z_j}$ with $a = e^{-1/2\sigma^2}$, so its spectrum depends only on
the weight $|s|$,
\begin{equation}
    \hat\kappa(s) = (1+a)^{\,n-|s|}\,(1-a)^{\,|s|}.
    \label{eq:app:kernel-spectrum}
\end{equation}
As $0 < a < 1$, the weights are positive and decrease in $|s|$, so the kernel acts as a low-pass filter over Pauli weight and suppresses the contribution of high-weight Pauli strings to the loss.

For the IQP circuits of Definition~\ref{def:iqp}, the Walsh--Hadamard basis is the natural one, since the commuting generators $e^{-i\theta_s Z(s)}$ are diagonal Pauli-Z strings and the Hadamard layers map between the computational and Fourier bases. Eq.~\eqref{eq:app:mmd-fourier} then expresses the loss as a kernel-weighted sum of squared differences of Walsh--Hadamard coefficients, equivalently of Pauli-Z expectation values, between target and model. The constant mode $s = 0^n$ does not contribute, as $\hat p(0^n) = \hat q_{\boldsymbol{\theta}}(0^n) = 1$ for any normalized distribution, consistent with
the MMD being insensitive to overall normalization.

\newpage
\section{Additional details of the DQGM model}
\label{app:model-details}

Here, we provide additional details for our model design and choices that we excluded from the main text for readbility.

While choosing $U_\mathcal{M}$, single-qubit terms are included only on visible qubits, where they incorporate both model parameters and data-dependent phase shifts. Terms acting exclusively on hidden qubits are omitted, as they contribute only global phases after marginalization over the hidden register and therefore do not affect the training objective.

The data-encoding RZ rotations of the feature map $U_\phi$ (see Eq.~\eqref{eq:encoding}) are diagonal in the computational basis, the same basis as the second diagonal layer $D(\boldsymbol{\theta}^{(2)})$. During training we therefore absorb them into $\boldsymbol{\theta}^{(2)}$, so encoding a data point shifts the trainable angles instead of adding gates. A feature $x_l$ encoded on local qubits $j \in \{ 1,\ldots, d\}$ shifts the corresponding angles as
\begin{equation}
    \theta_{l,\,j}^{(2)} \;\longrightarrow\; \theta_{l,\,j}^{(2)} + \frac{2\pi x_l}{2^{\,j}}, 
\end{equation}
where $d$ is the encoding precision per feature and the total number of visible qubits is $n = k d$ for $k$ features.

The intermediate layer of $U_\mathcal{M}$ is fixed to \( \mathrm{RY}(\pi/2) \) on visible qubits and identity on hidden qubits. This choice preserves the diagonal structure on the hidden register while allowing nontrivial mixing on the visible subsystem. As a consequence, for any fixed hidden configuration, the circuit induces independent contributions from each visible qubit.

Our construction is a Fourier-based model and therefore inherits artifacts of the Fourier transform.
In practice, we observe that a model encoding $d$-bit precision data on $d$ qubits fails to represent certain distributions, which we attribute to a Nyquist-related effect.
We resolve this by adding a single dummy qubit, which we refer to as a frequency buffer. For all results reported in this paper, we therefore use $d+1$ qubits to encode $d$-bit precision data.

\newpage
\section{Derivation of the marginal estimator}
\label{app:derivation}

\subsection{Problem and circuit}

We derive an estimator for the marginal, establishing Lemma~\ref{lem:factorization} and Proposition~\ref{prop:estimator}.

Training minimizes the negative log-likelihood of Eq.~\eqref{eq:nll}, which for each input $x$ requires the marginal probability $\mathbb{P}_{\boldsymbol{\theta}}(0^n\mid x)$. The training circuit first encodes the data with $U_\phi(x)$ of Eq.~\eqref{eq:encoding}, a Hadamard and an $\mathrm{RZ}$ rotation on each visible qubit, and then applies the inverse model unitary $U_\mathcal{M}^\dagger$. Composing the two, the Hadamards of $U_\phi$ join the hidden-register Hadamards of $U_\mathcal{M}^\dagger$ into a full layer, and the circuit becomes
\begin{equation}
    W(\boldsymbol{\theta},x) = H^{\otimes N}\,D_1\,\bigl(\mathrm{RY}(\pi/2)^{\otimes n}\otimes I^{\otimes m}\bigr)\,D_2\,H^{\otimes N},
    \label{eq:Wapp}
\end{equation}
two diagonal layers separated by a fixed visible rotation and conjugated by Hadamard layers on all $N$ qubits, where $D_2=D(\boldsymbol{\theta}^{(2)})^\dagger D_\phi(x)$ carries the data and $D_1=D(\boldsymbol{\theta}^{(1)})^\dagger$. Fig.~\ref{fig:training-circuit-app} shows the circuit, with $U_\phi$ marked as a dashed block. Both diagonal layers are bipartite, with single-qubit terms on visible qubits and two-qubit terms that each join one visible qubit to one hidden qubit. The training signal is the probability of reading $0^n$ on the visible register with the hidden register summed out,
\begin{equation}
    \mathbb{P}_{\boldsymbol{\theta}}(0^n\mid x) = \sum_{\mathbf{b}_h\in\{0,1\}^m}\bigl|\langle 0^n,\mathbf{b}_h|W(\boldsymbol{\theta},x)|0^N\rangle\bigr|^2.
    \label{eq:margapp}
\end{equation}

\begin{figure}[!h]
\centering
\begin{quantikz}[column sep=0.6cm, row sep=0.25cm]
\lstick{$\ket{0}$}
& \qw
  \gategroup[
    wires=4,
    steps=3,
    style={dashed, rounded corners}
  ]{$U_\phi(x)$}
& \gate{H}
& \gate{\mathrm{RZ}\!\left(2\pi x/2^1\right)}
& \qw
  \gategroup[
    wires=8,
    steps=5,
    style={dashed, rounded corners}
  ]{$U_{\mathcal{M}}(\boldsymbol{\theta})^\dagger$}
& \gate[8]{D_2^\dagger(\boldsymbol{\theta}^{(2)})}
& \gate{\mathrm{RY}(\pi/2)}
& \gate[8]{D_1^\dagger(\boldsymbol{\theta}^{(1)})}
& \gate{H}
& \meter{}
\\
\lstick{$\ket{0}$}
& \qw
& \gate{H}
& \gate{\mathrm{RZ}\!\left(2\pi x/2^2\right)}
& \qw
& \ghost{D_2^\dagger(\boldsymbol{\theta}^{(2)})}
& \gate{\mathrm{RY}(\pi/2)}
& \ghost{D_1^\dagger(\boldsymbol{\theta}^{(1)})}
& \gate{H}
& \meter{}
\\
\setwiretype{n}
& \push{\vdots}
& \push{\vdots}
& \push{\vdots}
& \push{\vdots}
& \ghost{D_2^\dagger(\boldsymbol{\theta}^{(2)})}
& \push{\vdots}
& \ghost{D_1^\dagger(\boldsymbol{\theta}^{(1)})}
& \push{\vdots}
&
\\
\lstick{$\ket{0}$}
& \qw
& \gate{H}
& \gate{\mathrm{RZ}\!\left(2\pi x/2^d\right)}
& \qw
& \ghost{D_2^\dagger(\boldsymbol{\theta}^{(2)})}
& \gate{\mathrm{RY}(\pi/2)}
& \ghost{D_1^\dagger(\boldsymbol{\theta}^{(1)})}
& \gate{H}
& \meter{}
\\[0.30cm]
\lstick{$\ket{0}$}
& \qw
& \qw
& \qw
& \gate{H}
& \ghost{D_2^\dagger(\boldsymbol{\theta}^{(2)})}
& \qw
& \ghost{D_1^\dagger(\boldsymbol{\theta}^{(1)})}
& \gate{H}
& \ground{}
\\
\lstick{$\ket{0}$}
& \qw
& \qw
& \qw
& \gate{H}
& \ghost{D_2^\dagger(\boldsymbol{\theta}^{(2)})}
& \qw
& \ghost{D_1^\dagger(\boldsymbol{\theta}^{(1)})}
& \gate{H}
& \ground{}
\\
\setwiretype{n}
& \push{\vdots}
& \push{\vdots}
& \push{\vdots}
& \push{\vdots}
& \ghost{D_2^\dagger(\boldsymbol{\theta}^{(2)})}
& \push{\vdots}
& \ghost{D_1^\dagger(\boldsymbol{\theta}^{(1)})}
& \push{\vdots}
&
\\
\lstick{$\ket{0}$}
& \qw
& \qw
& \qw
& \gate{H}
& \ghost{D_2^\dagger(\boldsymbol{\theta}^{(2)})}
& \qw
& \ghost{D_1^\dagger(\boldsymbol{\theta}^{(1)})}
& \gate{H}
& \ground{}
\end{quantikz}
\caption{\textbf{A detailed view of the training circuit.} The encoding $U_\phi(x)$ applies a Hadamard and an $\mathrm{RZ}(2\pi x/2^{\,j})$ to each visible qubit $j$. The inverse model unitary $U_\mathcal{M}^\dagger$ then applies two diagonal layers $D_1,D_2$ separated by a fixed $\mathrm{RY}(\pi/2)$ on the visible qubits, conjugated by Hadamard layers on all $N$ qubits. The training signal $\mathbb{P}_{\boldsymbol{\theta}}(0^n\mid x)$ is the probability of the all-zero visible outcome with the hidden register traced out. Here, we show the univariate (single feature) case, where $n=d$, $N=d+m$ and $x \in \{0, 1, \dots, 2^d - 1 \}$.}
    \label{fig:training-circuit-app}
\end{figure}

The sum in Eq.~\eqref{eq:margapp} runs over all $2^m$ hidden configurations, so a direct evaluation costs time exponential in $m$. Parseval's identity removes this cost, as follows. The circuit ends with a Hadamard layer on all qubits, so measuring the hidden register in the computational basis amounts to a Walsh-Hadamard transform, and summing the squared amplitude over the hidden outcomes equals a sum over the conjugate hidden variable. In that variable the visible qubits decouple, because each visible qubit couples only to hidden qubits, and the marginal becomes a uniform average of a quantity that factorizes over the visible register.

\subsection{Setup}

We use $n$ visible and $m$ hidden qubits, $N=n+m$ in total, and write a computational-basis string as $\mathbf{z}=(\mathbf{a},\boldsymbol{\omega})$ with visible part $\mathbf{a}\in\{0,1\}^n$ and hidden part $\boldsymbol{\omega}\in\{0,1\}^m$.

Following Bravyi et~al.~\cite{bravyi2021}, which builds on the Monte Carlo overlap method of Van den Nest~\cite{vandennest2010}, we split $W$ at its middle, introduce two states, and form an amplitude ratio between them. The first state collects the input Hadamards, the data-carrying diagonal, and the intermediate rotation, and the second collects the output diagonal,
\begin{equation}
    |\alpha\rangle = \bigl(\mathrm{RY}(\pi/2)^{\otimes n}\otimes I\bigr)D_2 H^{\otimes N}|0^N\rangle,
    \qquad
    |\beta\rangle = D_1^\dagger H^{\otimes N}|0^N\rangle.
\end{equation}
We record the basis amplitudes $A(\mathbf{z})=\langle\mathbf{z}|\alpha\rangle$ and $B(\mathbf{z})=\langle\mathbf{z}|\beta\rangle$, the ratio $R=\overline{A}/\overline{B}$, and its average over the visible register $g(\boldsymbol{\omega})=\mathbb{E}_{\mathbf{a}\sim\mathrm{Unif}}[R]$. The estimator is built from these objects.

Each gate in a diagonal layer contributes a phase of unit modulus. A single-qubit term of angle $\theta$ on a bit $t$ contributes $g_1(t;\theta)=\exp(-i\theta(-1)^t)$, and a two-qubit term of angle $\theta$ on bits $t,b$ contributes $g_2(t,b;\theta)=\exp(-i\theta(-1)^{t+b})$. The angles of $D_2$ include the data through $D_\phi(x)$, and the angles of $D_1$ are those of $D(\boldsymbol{\theta}^{(1)})^\dagger$.

Since $D_1$ is a diagonal unitary it only rephases each basis state, so its entries $(D_1)_{\mathbf{z}}$ have unit modulus and $B$ has constant modulus. Using $|\beta\rangle=D_2^\dagger H^{\otimes N}|0^N\rangle$ and $\langle\mathbf{z}|H^{\otimes N}|0^N\rangle=2^{-N/2}$,
\begin{equation}
    B(\mathbf{z}) = 2^{-N/2}\,\overline{(D_1)_{\mathbf{z}}},
    \qquad |B(\mathbf{z})| = 2^{-N/2}\ \text{for all }\mathbf{z}.
\end{equation}
The two identities below follow,
\begin{equation}
    (D_1)_{\mathbf{z}} = 2^{N/2}\,\overline{B(\mathbf{z})},
    \qquad
    \frac{1}{\overline{B(\mathbf{z})}} = 2^{N}\,B(\mathbf{z}).
    \label{eq:Bident}
\end{equation}
The constant modulus of $B$ also keeps $R$ finite, since $\overline{B}$ never vanishes.

\subsection{Proof of Lemma~\ref{lem:factorization}}

We begin by restating Lemma~\ref{lem:factorization} from the main text.

\factlemma*   
\begin{proof}
The circuit never couples two visible qubits, so once the hidden bits are fixed each visible qubit evolves independently, and $A$, $B$, and $g$ factorize over the visible register.

Insert $H^{\otimes N}|0^N\rangle=2^{-N/2}\sum_{\mathbf{t},\mathbf{c}}|\mathbf{t},\mathbf{c}\rangle$ into $|\alpha\rangle$. The intermediate rotation acts on visible qubits only, so $\langle\mathbf{a},\boldsymbol{\omega}|(\mathrm{RY}(\tfrac{\pi}{2})^{\otimes n}\otimes I)|\mathbf{t},\mathbf{c}\rangle=\delta_{\boldsymbol{\omega},\mathbf{c}}\prod_v[\mathrm{RY}(\tfrac{\pi}{2})]_{a_v,t_v}$. This fixes the hidden bits to $\boldsymbol{\omega}$ and is a product over visible qubits. The diagonal entry of $D_1$ at $(\mathbf{t},\boldsymbol{\omega})$ is the product of its gate phases, grouped by visible qubit since $D_1$ has no visible-to-visible term,
\begin{equation}
    (D_2)_{\mathbf{t},\boldsymbol{\omega}} = \prod_{v=1}^{n} g_1(t_v;\theta^{(2)}_v)\!\prod_{j\in\mathcal{N}(v)}\! g_2(t_v,\omega_j;\theta^{(2)}_{vj}),
\end{equation}
where $\mathcal{N}(v)$ is the set of hidden qubits coupled to $v$. The sum over $\mathbf{t}$ then factorizes, and using $[\mathrm{RY}(\tfrac{\pi}{2})]_{s,t}=2^{-1/2}(-1)^{(1-s)t}$,
\begin{equation}
    A(\mathbf{a},\boldsymbol{\omega}) = 2^{-N/2}\prod_{v=1}^{n}\mathrm{LS}_A(v,a_v;\boldsymbol{\omega}),
\end{equation}
\begin{equation}
    \mathrm{LS}_A(v,s;\boldsymbol{\omega}) = 2^{-1/2}\!\!\sum_{t\in\{0,1\}}\!\!(-1)^{(1-s)t}\,g_1(t;\theta^{(2)}_v)\!\!\prod_{j\in\mathcal{N}(v)}\!\! g_2(t,\omega_j;\theta^{(2)}_{vj}).
\end{equation}
The same steps applied to the output diagonal, which has no intermediate rotation and so no sum over $\mathbf{t}$, give a product of per-qubit factors. Because this diagonal enters the training circuit as its adjoint $D(\boldsymbol{\theta}^{(2)})^\dagger$ (Eq.~\eqref{eq:Wapp}), its single- and two-qubit phases carry the opposite sign,
\begin{equation}
    B(\mathbf{a},\boldsymbol{\omega}) = 2^{-N/2}\prod_{v=1}^{n}\mathrm{LS}_B(v,a_v;\boldsymbol{\omega}),
    \quad
    \mathrm{LS}_B(v,s;\boldsymbol{\omega}) = g_1(s;-\theta^2_v)\!\!\prod_{j\in\mathcal{N}(v)}\!\! g_2(s,\omega_j;-\theta^2_{vj}).
\end{equation}
Both $A$ and $B$ carry the same $2^{-N/2}$ prefactor, which cancels in the ratio, so $R$ becomes a product of per-qubit ratios,
\begin{equation}
    R(\mathbf{a},\boldsymbol{\omega}) = \prod_{v=1}^{n} r_v(a_v;\boldsymbol{\omega}),
    \qquad
    r_v(s;\boldsymbol{\omega}) = \overline{\mathrm{LS}_A(v,s;\boldsymbol{\omega})}\,/\,\overline{\mathrm{LS}_B(v,s;\boldsymbol{\omega})}.
\end{equation}
The visible bits $a_1,\dots,a_n$ are averaged independently and uniformly, so the average of the product is the product of the averages,
\begin{equation}
    g(\boldsymbol{\omega}) = \prod_{v=1}^{n}\frac{r_v(0;\boldsymbol{\omega})+r_v(1;\boldsymbol{\omega})}{2}.
\end{equation}
Evaluating one $r_v$ needs the two terms of $\mathrm{LS}_A$ and the single $\mathrm{LS}_B$, each a product over the $|\mathcal{N}(v)|$ hidden neighbours of $v$, at cost $O(d_{\max})$, with $d_{\max}=\max_v|\mathcal{N}(v)|$. The product over $n$ visible qubits costs $O(n\,d_{\max})$. 
\end{proof}

\subsection{Proof of Proposition~\ref{prop:estimator}}

Now let us restate Proposition~\ref{prop:estimator} from the main text.

\estimatorprop*
\begin{proof}
The amplitude of a single hidden outcome, as a function of that outcome, is a Walsh-Hadamard transform over the hidden register.

Let $a(\mathbf{b}_h)=\langle 0^n,\mathbf{b}_h|W|0^N\rangle$, so that $\mathbb{P}_{\boldsymbol{\theta}}(0^n\mid x)=\sum_{\mathbf{b}_h}|a(\mathbf{b}_h)|^2$ by Eq.~\eqref{eq:margapp}. Everything in Eq.~\eqref{eq:Wapp} up to the final Hadamards is $|\alpha\rangle$ followed by $D_1$, so $W|0^N\rangle=H^{\otimes N}D_1|\alpha\rangle$ and
\begin{equation}
    a(\mathbf{b}_h)=\langle 0^n,\mathbf{b}_h|H^{\otimes N}D_1|\alpha\rangle.
\end{equation}
The final Hadamards act on the bra. On the visible qubits the fixed outcome $0^n$ becomes a uniform superposition with no phase, while on the hidden qubits the outcome $\mathbf{b}_h$ produces the Walsh-Hadamard kernel, introducing a new hidden variable $\boldsymbol{\omega}$,
\begin{equation}
    \langle 0^n,\mathbf{b}_h|H^{\otimes N} = 2^{-N/2}\sum_{\mathbf{a},\boldsymbol{\omega}}(-1)^{\mathbf{b}_h\cdot\boldsymbol{\omega}}\,\langle\mathbf{a},\boldsymbol{\omega}|.
\end{equation}
Inserting this, using that $D_1$ is diagonal with $\langle\mathbf{a},\boldsymbol{\omega}|\alpha\rangle=A(\mathbf{a},\boldsymbol{\omega})$ and $(D_1)_{\mathbf{z}}=2^{N/2}\overline{B}$ from Eq.~\eqref{eq:Bident}, the $2^{\pm N/2}$ factors cancel and
\begin{equation}
    a(\mathbf{b}_h) = \sum_{\boldsymbol{\omega}}(-1)^{\mathbf{b}_h\cdot\boldsymbol{\omega}}\,F(\boldsymbol{\omega}),
    \qquad
    F(\boldsymbol{\omega}) := \sum_{\mathbf{a}}\overline{B(\mathbf{a},\boldsymbol{\omega})}\,A(\mathbf{a},\boldsymbol{\omega}).
    \label{eq:aWH}
\end{equation}
The amplitude is the Walsh-Hadamard transform of $F$ over the hidden register.

A Walsh-Hadamard transform preserves the sum of squared magnitudes. Applying this through the orthogonality $\sum_{\mathbf{b}_h}(-1)^{\mathbf{b}_h\cdot(\boldsymbol{\omega}-\boldsymbol{\omega}')}=2^m\delta_{\boldsymbol{\omega},\boldsymbol{\omega}'}$ to Eq.~\eqref{eq:aWH},
\begin{equation}
    \mathbb{P}_{\boldsymbol{\theta}}(0^n\mid x) = \sum_{\mathbf{b}_h}|a(\mathbf{b}_h)|^2 = 2^m\sum_{\boldsymbol{\omega}}|F(\boldsymbol{\omega})|^2.
    \label{eq:Pparseval}
\end{equation}
Using $1/\overline{B}=2^N B$ from Eq.~\eqref{eq:Bident},
\begin{equation}
    g(\boldsymbol{\omega}) = 2^{-n}\sum_{\mathbf{a}}\frac{\overline{A}}{\overline{B}} = 2^{m}\sum_{\mathbf{a}}\overline{A(\mathbf{a},\boldsymbol{\omega})}\,B(\mathbf{a},\boldsymbol{\omega}) = 2^{m}\,\overline{F(\boldsymbol{\omega})},
\end{equation}
so $|F(\boldsymbol{\omega})|^2=2^{-2m}|g(\boldsymbol{\omega})|^2$. Substituting into Eq.~\eqref{eq:Pparseval},
\begin{equation}
    \mathbb{P}_{\boldsymbol{\theta}}(0^n\mid x) = \frac{1}{2^m}\sum_{\boldsymbol{\omega}}|g(\boldsymbol{\omega})|^2 = \mathbb{E}_{\boldsymbol{\omega}\sim\mathrm{Unif}(\{0,1\}^m)}\bigl[|g(\boldsymbol{\omega})|^2\bigr].
\end{equation}
\end{proof}

\subsection{Toy example}
\label{app:example}

We illustrate the estimator on a small instance that keeps both the input $x$ and the parameters explicit, two visible qubits coupled to one hidden qubit. We take $n=2$, $m=1$, and encoding precision $d=2$. The data enters the data-layer single-qubit angles additively, $\theta^1_v \to \theta^1_v + \pi x/2^{\,v}$, while $\theta^1_v$, the couplings $\theta^1_{vh}$, and the output-layer angles $\theta^2_v,\theta^2_{vh}$ are trainable.

By Lemma~\ref{lem:factorization}, $g$ factorizes over the two visible qubits, $g(\omega)=G_1(\omega)\,G_2(\omega)$ with $G_v(\omega)=\tfrac12[r_v(0;\omega)+r_v(1;\omega)]$. Evaluating $r_v$ from $\mathrm{LS}_A$ and $\mathrm{LS}_B$ and simplifying gives, for one hidden neighbour,
\begin{equation*}
    |G_v(\omega)|^2 = \tfrac12\Bigl(1-\sin 2\bigl(\theta^1_v+(-1)^\omega\theta^1_{vh}\bigr)\,\sin 2\bigl(\theta^2_v+(-1)^\omega\theta^2_{vh}\bigr)\Bigr).
\end{equation*}
The marginal averages $|g(\omega)|^2=|G_1(\omega)|^2|G_2(\omega)|^2$ over the hidden qubit,
\begin{equation*}
    \mathbb{P}_{\boldsymbol\theta}(0^2\mid x) = \tfrac12\sum_{\omega\in\{0,1\}} |G_1(\omega)|^2\,|G_2(\omega)|^2 .
\end{equation*}
As a concrete check, take $x=0$ with $\theta^1_v=\theta^2_v=\pi/8$ and $\theta^1_{vh}=\theta^2_{vh}=\pi/16$ for both qubits. Then $|G_v(0)|^2=\tfrac12\sin^2(\pi/8)$ and $|G_v(1)|^2=\tfrac12\cos^2(\pi/8)$, so
\begin{equation*}
    \mathbb{P}_{\boldsymbol\theta}(0^2\mid 0)
    = \tfrac18\bigl(\sin^4\tfrac{\pi}{8}+\cos^4\tfrac{\pi}{8}\bigr)
    = \frac{3}{32}.
\end{equation*}
If the two visible qubits were independent, the all-zero marginal would factor into the product of the single-qubit marginals, $\tfrac14\cdot\tfrac14=\tfrac1{16}$. The value $3/32\neq1/16$ shows that the shared hidden qubit correlates them, in other words each factor $|G_v(\omega)|^2$ depends on its own qubit alone, yet the sum over the shared hidden bit $\omega$ ties $G_1$ and $G_2$ together. This correlation is absent for circuits without a shared hidden neighbour.

\newpage
\section{Additional details of numerical results}
\label{app:numerics}

This appendix provides implementation details and additional numerical evidence supporting Sec.~\ref{sec:numerics}.

We first summarize the computational environment and timing conventions. Then, we provide expanded estimator benchmarks, including uncertainty decompositions and memory measurements that were omitted from the main text for readability. Finally, further subsections report training curves, baseline settings, and hardware-execution diagnostics.

\subsection{Hardware and software information}
\label{app:hardware-software}

All CPU statevector benchmarks were performed on a dual-socket Intel Xeon Silver 4216 system with 32 physical cores and 64 logical threads.
Each processor has a base clock frequency of $2.10\,\mathrm{GHz}$.
Exact statevector references were computed using Qiskit~\texttt{version 2.4.0}~\cite{qiskit2024}, with 64 CPU threads.

The Parseval estimator, training loops, and classical scaling experiments were implemented in Python using JAX~\texttt{version 0.8.2}.
GPU runs were executed on NVIDIA Tesla V100S GPUs with 32 GB memory and CUDA~\texttt{version 12.8}.
Each benchmark or training run used a single GPU.
Multiple GPUs were used only to parallelize independent runs, never to distribute a single estimator or training instance.
Reported JAX timings were measured after warm-up, exclude one-time compilation costs, and were synchronized using \texttt{block\_until\_ready()}.
Unless otherwise stated, memory values correspond to a single process on a single GPU.

We used Qiskit for simulated sampling from trained circuits.
When exact probabilities were required, they were obtained using the statevector simulator.
Quantum hardware sampling was performed on \texttt{ibm\_aachen}.
The hardware results use raw measurement statistics and do not apply measurement-error mitigation, zero-noise extrapolation, dynamical decoupling, twirling, postselection, or other mitigation procedures.
Hardware-executed circuits use the backend support for fractional gates.

\subsection{Estimator accuracy and resource scaling}
\label{app:estimator-numerics}
This appendix gives additional details for the estimator benchmarks reported in Sec.~\ref{sec:numerics}, focusing on uncertainty decomposition in the statevector comparison and on the memory model used for the large-scale scaling estimates.

Fig.~\ref{fig:simulation-quality-detailed} gives a more detailed version of the estimator benchmark shown in Fig.~\ref{fig:estimator-numerics}.
The figure separates two sources of uncertainty.
For each total system size $N=n+m$ and Monte Carlo sample count $K$, we use $100$ independently initialized training circuits.
For each parameter seed, the same target probability is estimated $100$ times with independent Monte Carlo samples and compared with an exact statevector reference.

The absolute error for parameter seed $s$ and Monte Carlo repetition $r$ can be introduced as
\begin{equation}
e_{s,r}(N,K)=\bigl|\hat p_{s,r}(N,K)-p^{\mathrm{ref}}_{s}(N)\bigr|.
\end{equation}
We define the per-seed mean error
\begin{equation}
\bar e_s(N,K)=\frac{1}{100}\sum_{r=1}^{100}e_{s,r}(N,K).
\end{equation}
The top row of Fig.~\ref{fig:simulation-quality-detailed} reports
\begin{equation}
\langle \bar e_s(N,K)\rangle_s
\pm
\operatorname{SD}_s\!\left[\bar e_s(N,K)\right],
\end{equation}
which treats each random circuit instance as one experimental unit.
The middle row reports the same central value, but with error bars
\begin{equation}
\left\langle
\operatorname{SD}_r\!\left[e_{s,r}(N,K)\right]
\right\rangle_s ,
\end{equation}
which quantify the Monte Carlo variation for a typical circuit instance.

The two rows therefore answer different questions.
The first shows how much the mean estimator accuracy changes across random training circuits.
The second shows how much a repeated Monte Carlo estimate fluctuates for a fixed circuit.
In this benchmark, the Monte Carlo variation is larger than the across-seed variation, and it decreases with increasing $K$, consistent with the expected $1/\sqrt{K}$ behaviour of Monte Carlo estimates.
The lower row shows the corresponding per-run wall time.
The evaluation remains below $0.5\,\mathrm{ms}$ for almost all tested configurations and reaches about $1.1\,\mathrm{ms}$ only for $K=10^4$ near the largest statevector-accessible system sizes.

\begin{figure}[!h]
    \centering
    \includegraphics[width=\linewidth]{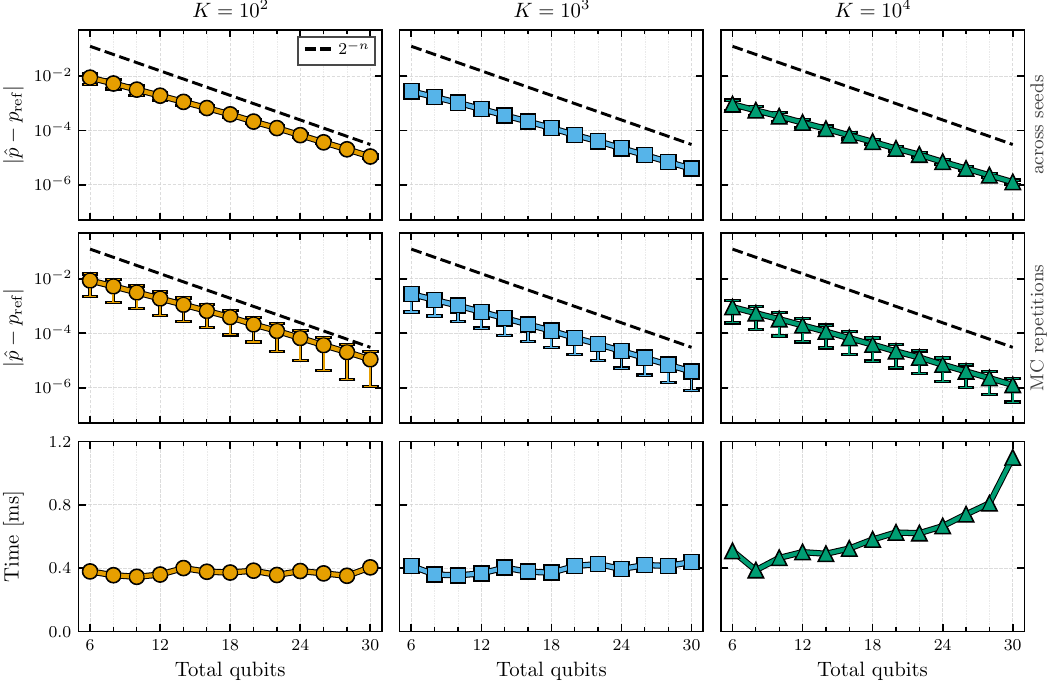}
    \caption{\textbf{Detailed Monte Carlo estimator accuracy and runtime benchmark.}
    The three columns correspond to $K=10^2$, $K=10^3$, and $K=10^4$ Monte Carlo samples.
    The horizontal axis shows the total number of qubits $N=n+m$, with $n=m$ in all panels.
    Each $(N,K)$ point uses $100$ independently initialized training circuits.
    For each circuit, the same visible marginal probability is estimated $100$ times with independent Monte Carlo samples and compared with an exact statevector reference.
    The top row shows the mean absolute error $\langle |\hat p-p_{\mathrm{ref}}| \rangle$ with error bars given by the standard deviation across parameter seeds of the per-seed mean error.
    The middle row shows the same central value with error bars given by the average, over parameter seeds, of the within-seed standard deviation across Monte Carlo repetitions.
    The dashed line marks the natural probability scale $2^{-n}$, where $n$ is the number of visible qubits.
    The bottom row shows the per-run wall time in milliseconds on a single NVIDIA Tesla V100S GPU.
    No timing error bars are shown, since one timing value is logged for each configuration.}
    \label{fig:simulation-quality-detailed}
\end{figure}

The memory curves in Fig.~\ref{fig:estimator-numerics}\textcolor{red}{b} use a conservative estimate rather than the raw measured peak memory.
We use this estimate because the measured resident memory depends on backend-level details of JAX/XLA fusion and allocation, which makes the raw values less stable across software versions and hardware configurations.

For the fully connected visible-hidden bipartite setting, the dominant storage term in the Parseval Monte Carlo contraction scales as $Knm$, where $K$ is the Monte Carlo sample count, $n$ is the number of visible qubits, and $m$ is the number of hidden qubits.
We therefore use the empirical upper-bound model
\begin{equation}
M_{\mathrm{est}}(K,n,m)
=
320~\mathrm{MB}
+
\frac{24}{2^{20}}Knm~\mathrm{MB}.
\label{eq:memory-estimate}
\end{equation}
The fixed term accounts for the baseline JAX/CUDA allocation, while the second term captures the leading $O(Knm)$ storage cost.
The coefficient $24$ B per element is deliberately conservative.
In the scaling plots we set $n=m$, so the total number of qubits is $n+m$.

Fig.~\ref{fig:runtime-scaling-detailed} compares this estimate with measured peak memory.
For $K\leq10^3$, the measured peak remains close to the baseline and grows much more slowly than the estimate.
We attribute this to XLA fusion, which can stream the dominant contraction rather than materializing the full intermediate.
For $K=10^4$, the measured values follow the predicted scaling more closely, and the estimate correctly predicts the device-memory limit.
We use Eq.~\eqref{eq:memory-estimate} as a conservative scheduling rule.
It overestimates memory in the fusion-dominated regime, but it is sufficiently tight in the large-$K$, large-$n,m$ regime that determines feasibility.

\begin{figure}[!h]
    \centering
    \includegraphics[width=\linewidth]{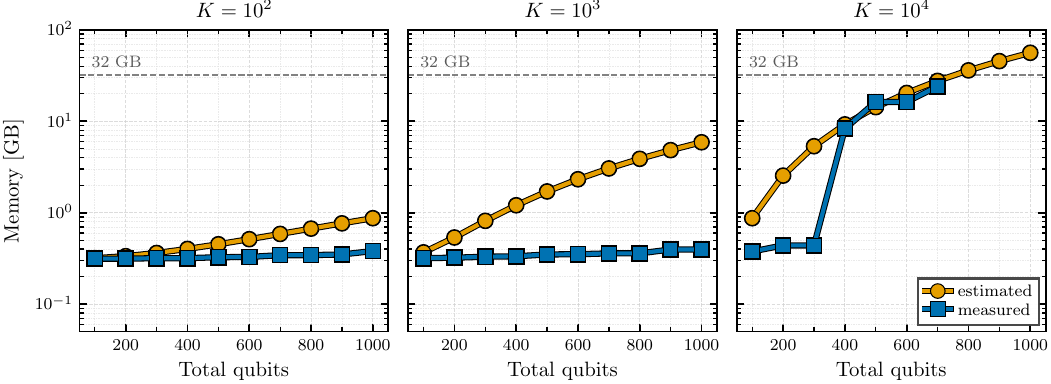}
    \caption{\textbf{Peak GPU memory for the Parseval Monte Carlo simulator.}
    Measured and estimated peak memory are shown as a function of the total number of qubits for $K=10^2$, $K=10^3$, and $K=10^4$ Monte Carlo samples.
    Measurements are per-process maxima obtained by polling \texttt{nvidia-smi} during execution on a single NVIDIA Tesla V100S GPU with 32 GB memory.
    The analytic estimate is given by Eq.~\eqref{eq:memory-estimate}.
    The dashed horizontal line marks the 32 GB device limit.
    In these benchmarks we set $n=m$, where $n$ is the number of visible qubits and $m$ is the number of hidden qubits.
    Missing measured points for $K=10^4$ correspond to runs that exceeded device memory.
    The estimate is conservative for $K\leq10^3$, where XLA fusion appears to avoid materializing the dominant intermediate, and becomes tighter in the large-$K$ regime that determines feasibility.}
    \label{fig:runtime-scaling-detailed}
\end{figure}

\subsection{Univariate benchmark training details}
\label{app:univariate-details}

Fig.~\ref{fig:exp1-learning-curves} shows the training dynamics for the four univariate benchmarks reported in Fig.~\ref{fig:univarite-numerics}\textcolor{red}{a}.
The columns correspond to the Gaussian, Gaussian mixture, cosine, and L\'evy targets.
The rows show the mini-batch negative log-likelihood, the full-distribution KL divergence $\KL(p_{\mathrm{target}}\|p_{\mathrm{model}})$, and the total variation distance $\TVD(p_{\mathrm{target}},p_{\mathrm{model}})$.
The NLL is logged every $10$ optimization steps on mini-batches of size $64$, which explains the visible fluctuations in the first row.
KL and TVD are evaluated every $1000$ steps from the full target and model PMFs, and therefore appear as checkpointed curves.

All four runs use the same architecture and hyperparameters, differing only in the target distribution.
The model has $m=4$ hidden qubits with fully connected visible-hidden bipartite connectivity, and gradients are estimated with $K=10^3$ hidden-register samples per optimization step.
We train with Adam for $10^4$ steps using batch size $64$, initial learning rate $0.15$, and exponential decay $0.99$ every $10$ steps.

The target distributions are represented on the \texttt{float16} grid over $x\in[0,1]$, corresponding to an effective $2^{16}$-point support.
Since the DQGM is formulated in a Fourier representation, we use one additional encoding qubit per feature as a frequency buffer near the Nyquist limit.
This avoids using the highest representable Fourier component in the readout, where discretization and aliasing effects are most pronounced.
In practice, using a single buffer qubit improves stability, while increasing the buffer beyond one qubit did not lead to further improvement in our experiments.
We therefore use this convention throughout, encoding each feature with one additional qubit and omitting the highest-frequency readout bit when forming the evaluated model distribution.

The training curves show the same qualitative behavior across the four targets.
The NLL decreases rapidly and then fluctuates around a mini-batch noise floor, while KL and TVD decrease during the early part of training and then plateau.
The cosine target has the largest residual error, consistent with the metrics reported in the main text.
Each univariate run completes in less than five minutes on a single GPU.

\begin{figure}[!h]
    \centering
    \includegraphics[width=\linewidth]{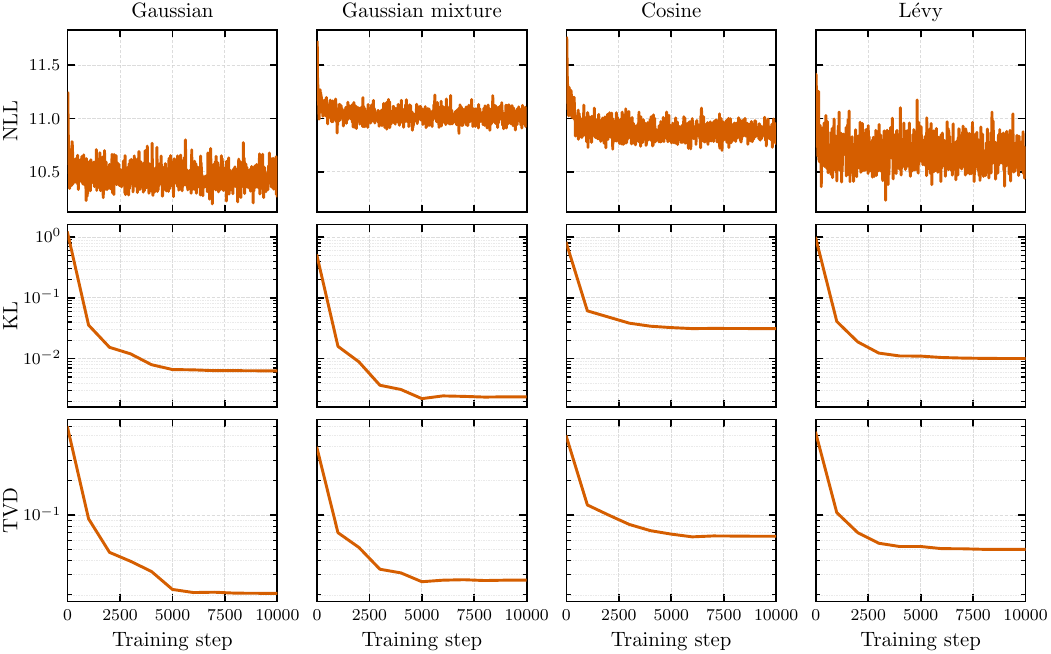}
    \caption{\textbf{Learning curves for the univariate benchmark distributions.}
    Columns correspond to the Gaussian, Gaussian mixture, cosine, and L\'evy targets used in Fig.~\ref{fig:univarite-numerics}\textcolor{red}{a}.
    The rows show the mini-batch negative log-likelihood, the full-distribution KL divergence $\KL(p_{\mathrm{target}}\|p_{\mathrm{model}})$, and the total variation distance $\TVD(p_{\mathrm{target}},p_{\mathrm{model}})$.
    NLL is logged every $10$ optimization steps on mini-batches of size $64$.
    KL and TVD are computed every $1000$ steps using the full target and model PMFs.
    All runs use the same architecture, initialization, and hyperparameters with $m=4$ hidden qubits and $K=10^3$ hidden-register samples per gradient estimate.}
    \label{fig:exp1-learning-curves}
\end{figure}

\subsection{Hardware run details}
\label{app:hardware}

We detail the steps taken to execute the trained DQGM circuits on the device.
Three approximations separate the trained circuit from the one we run, and we describe each in turn.

We map each circuit to the device coupling map with the SABRE transpiler in Qiskit.
SABRE is a stochastic transpiler, so repeated runs return
circuits of differing depth and two-qubit gate count.
We therefore transpile each circuit $100$ times and retain the realization with the fewest two-qubit gates.

We then truncate the parametrized gates.
Many trained rotation angles are small and contribute little to the output state, so we drop every parametrized gate whose angle satisfies $|\theta| < 0.05$.
This reduces the gate count at a controlled cost in accuracy.

We replace the exact inverse quantum Fourier transform (QFT) with an approximate inverse QFT, omitting the controlled-phase rotations between qubits separated by more than three positions.
The dropped rotations carry the smallest phases, so the resulting degree-$3$ approximation reproduces the exact transform to within a small error while removing a large fraction of the two-qubit gates.

Figure~\ref{fig:exp4-detailed} traces the effect of these approximations across the pipeline.
We run the resulting circuits on the IBM backend \texttt{ibm\_aachen} with $10^5$ shots and it uses 30 seconds hardware time.
The approximate circuit (third column) tracks the ideal simulation (second columnd) closely, so the dominant loss of fidelity arises at the hardware stage (fourth column).

\begin{figure}[h]
    \centering
    \includegraphics[width=\textwidth]{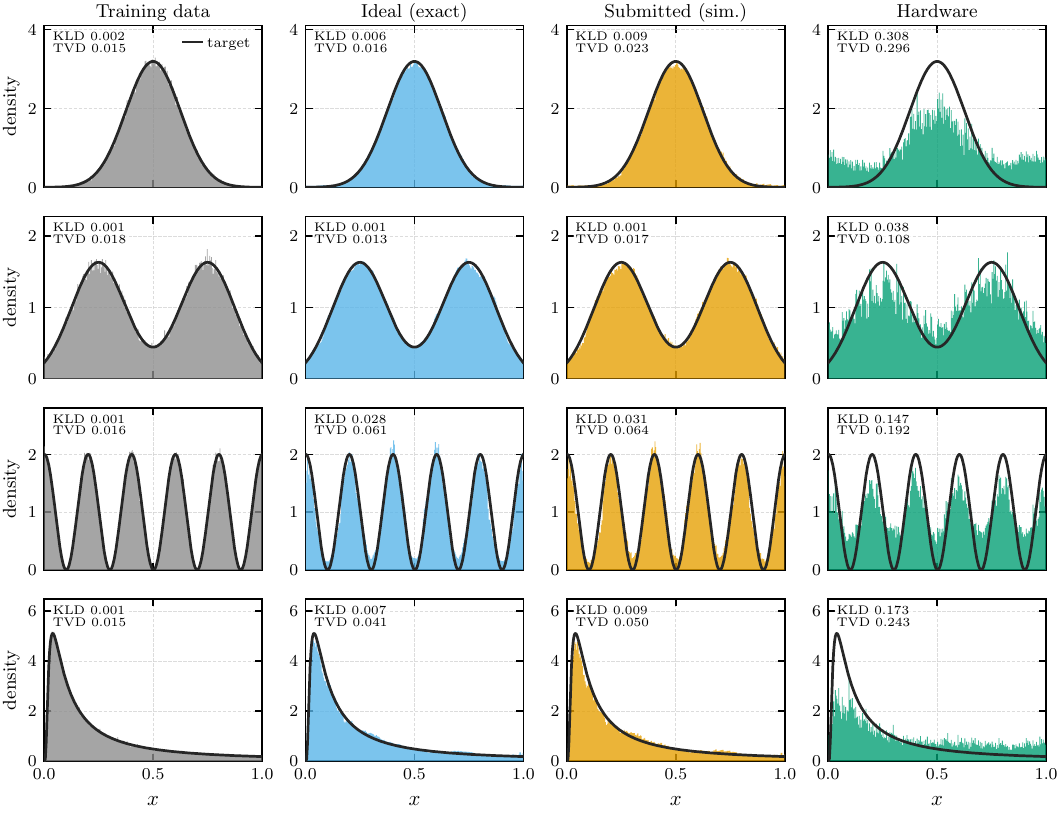}
    \caption{\textbf{Stage-by-stage comparison of the learned univariate densities on hardware.} 
        Each row is a separate target distribution (top to
        bottom: Gaussian, Gaussian mixture, cosine, and L\'evy), and the columns
        trace the pipeline from data to device, left to right.
        \textbf{(i)}~\emph{Training data}, the empirical histogram of the samples
        the model was trained on.
        \textbf{(ii)}~\emph{Ideal (exact)}, the trained DQGM circuit simulated
        exactly with a standard inverse QFT and no gate truncation.
        \textbf{(iii)}~\emph{Submitted (sim.)}, the exact statevector of the
        circuit actually submitted to the device, with the approximate QFT and
        gate truncation applied, isolating the approximation error from device
        noise.
        \textbf{(iv)}~\emph{Hardware}, the counts returned by the device.
        In every panel the solid black line is the analytical target density.
        Each panel is annotated with the Kullback--Leibler divergence (KLD) and
        total variation distance (TVD) to that target. For the training-data
        column these quantify the finite-size training set.}
    \label{fig:exp4-detailed}
\end{figure}

\subsection{Bivariate benchmarks --- Part 1}
\label{app:2d-iqp}

We benchmark DQGM against an instantaneous quantum polynomial (IQP) circuit on
the bivariate targets. We train the IQP circuit with the maximum mean discrepancy
(MMD) loss and DQGM with the negative log-likelihood (NLL) loss, and for each model
we compare the exact loss estimator against a Monte Carlo (MC) estimator.

Figure~\ref{fig:exp6} compares two IQP variants on the four-peak Gaussian target.
The top row shows the target distribution, the best IQP trained with the exact MMD
loss, and the best IQP trained with the same loss estimated by Monte Carlo sampling.
The exact-trained IQP reproduces the target most closely, but exact training does
not scale, so we include it only to gauge the expressibility of the circuit. The
Monte Carlo-trained IQP recovers the low-order Walsh--Hadamard components shown in
the bottom row, while the high-order components remain inaccurate.

Table~\ref{tab:sweep} reports the final metrics across a hyperparameter sweep. We
vary the maximum Pauli weight, the number of hidden qubits, and the estimator, and
we record the resulting gate count together with the final KL and TVD to the
target. Each entry is the mean and standard deviation over seeds.

\begin{figure}[!t]
    \centering
    \includegraphics[width=\linewidth]{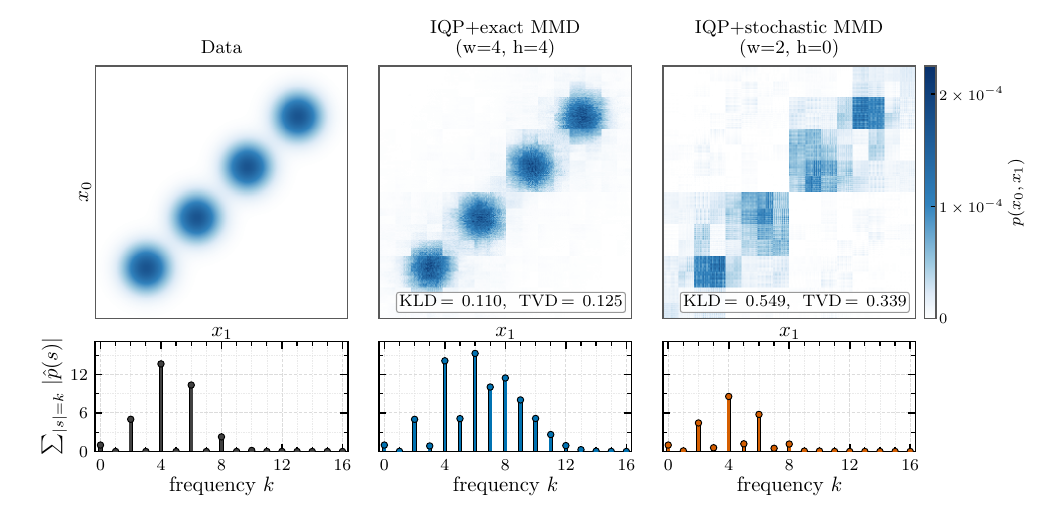}
    \caption{\textbf{Bivariate benchmark of exact and Monte Carlo IQP training.}
    Comparison on a four-peak Gaussian target with an $8$-bit discretization for
    each coordinate. The top row shows the target distribution (left), the best
    IQP trained with the exact MMD loss (middle), and the best IQP+MMD trained with
    a Monte Carlo (MC) estimator (right). Exact training does not scale to large
    numbers of qubits, and we include it as a sanity check on the expressibility of
    the IQP circuit. The IQP circuit contains all one-qubit $\mathrm{RZ}$ and all two-qubit
    $\mathrm{RZZ}$ rotations on the visible and hidden qubits ($m=4$), trained with a
    multi-kernel MMD objective using bandwidths $\sigma\in\{0.01,0.1,1.0\}$. The
    bottom row shows the total absolute Walsh--Hadamard spectral weight grouped by
    Hamming weight, $\sum_{|s|=k}|\hat p(s)|$. The low-order components match the
    target for both training methods, while the high-order components deviate. For
    both models we plot the best-performing seed. Quantitative metrics averaged over
    multiple seeds and the training details are reported in
    Table~\ref{tab:sweep}.}
    \label{fig:exp6}
\end{figure}
\begin{table}[!t]
  \centering
  \caption{\textbf{Hyperparameter sweep for the bivariate benchmark.} Final
    Kullback--Leibler divergence (KL) and total variation distance (TVD) to the
    target for each configuration, given as mean $\pm$ standard deviation over
    seeds. We sweep the maximum Pauli weight, the number of hidden qubits, and the
    estimator (exact or Monte Carlo, MC), and we report the gate count.}
  \label{tab:sweep}
  \begin{tabular*}{\textwidth}{@{\extracolsep{\fill}}lllrrrrr@{}}
    \toprule
    Model & Loss & Estimator & Max.\ weight & \#Hidden & \#Gates & Final KL & Final TVD \\
    \midrule
    IQP & MMD & exact & 2 & 0 & 136 & $0.767 \pm 0.000$ & $0.435 \pm 0.000$ \\
    IQP & MMD & MC & 2 & 0 & 136 & $0.626 \pm 0.161$ & $0.390 \pm 0.052$ \\
    IQP & MMD & exact & 2 & 2 & 171 & $0.692 \pm 0.000$ & $0.404 \pm 0.000$ \\
    IQP & MMD & MC & 2 & 2 & 171 & $0.592 \pm 0.049$ & $0.398 \pm 0.029$ \\
    IQP & MMD & exact & 2 & 4 & 210 & $0.532 \pm 0.000$ & $0.357 \pm 0.000$ \\
    IQP & MMD & MC & 2 & 4 & 210 & $0.581 \pm 0.018$ & $0.389 \pm 0.008$ \\
    IQP & MMD & exact & 3 & 0 & 696 & $0.441 \pm 0.128$ & $0.294 \pm 0.053$ \\
    IQP & MMD & MC & 3 & 0 & 696 & $1.043 \pm 0.025$ & $0.501 \pm 0.008$ \\
    IQP & MMD & exact & 3 & 2 & 987 & $0.312 \pm 0.046$ & $0.257 \pm 0.018$ \\
    IQP & MMD & MC & 3 & 2 & 987 & $0.656 \pm 0.014$ & $0.430 \pm 0.007$ \\
    IQP & MMD & exact & 3 & 4 & 1350 & $0.317 \pm 0.040$ & $0.258 \pm 0.024$ \\
    IQP & MMD & MC & 3 & 4 & 1350 & $0.590 \pm 0.029$ & $0.416 \pm 0.012$ \\
    IQP & MMD & exact & 4 & 0 & 2516 & $0.225 \pm 0.026$ & $0.188 \pm 0.014$ \\
    IQP & MMD & MC & 4 & 0 & 2516 & $1.285 \pm 0.034$ & $0.574 \pm 0.010$ \\
    IQP & MMD & exact & 4 & 2 & 4047 & $0.121 \pm 0.030$ & $0.142 \pm 0.025$ \\
    IQP & MMD & MC & 4 & 2 & 4047 & $1.556 \pm 0.004$ & $0.709 \pm 0.001$ \\
    IQP & MMD & exact & 4 & 4 & 6195 & $0.121 \pm 0.011$ & $0.140 \pm 0.011$ \\
    IQP & MMD & MC & 4 & 4 & 6195 & $1.478 \pm 0.002$ & $0.704 \pm 0.000$ \\
    \midrule
    DQGM (ours) & NLL & MC & 2 & 1 & 72 & $0.699 \pm 0.001$ & $0.507 \pm 0.001$ \\
    DQGM (ours) & NLL & MC & 2 & 2 & 108 & $0.169 \pm 0.001$ & $0.178 \pm 0.002$ \\
    DQGM (ours) & NLL & MC & 2 & 3 & 144 & $0.281 \pm 0.197$ & $0.252 \pm 0.136$ \\
    DQGM (ours) & NLL & MC & 2 & 4 & 180 & $0.203 \pm 0.172$ & $0.190 \pm 0.121$ \\
    \bottomrule
  \end{tabular*}
\end{table}

\newpage
\subsection{Bivariate bencmarks --- Part 2}
\label{app:2d-classical}

We tune each classical baseline under the same protocol as the quantum model, so
that the comparison reflects the models rather than unequal tuning effort. We train
with Adam optimized for a fixed budget of $10^4$ steps under an exponential learning-rate decay with per-step factor $\gamma = r^{1/\tau}$, where $r$ is the decay rate and $\tau$ the
number of transition steps ($r=1$ gives a constant learning rate). We train all models
on the \emph{blobs-on-spiral} dataset ($10^5$ samples, $2$ features at $8$-bit
precision).

We select hyperparameters with the Tree-structured Parzen Estimator (TPE)
sampler~\cite{optuna_2019}, using $10$ random startup trials and a median pruner. The
pruner becomes active after $12$ completed trials and prunes a trial once it has
reported for at least two seeds. We organize the search as a capacity grid. For each
fixed model size we run an independent study of $30$ trials over the optimization
hyperparameters of Table~\ref{tab:hpo-space}, scoring each trial by the agreement
between the generated and target distributions over $3$ seeds. We retrain the selected
configurations with $10$ random seeds and draw the run shown in the figures from these.
We show three capacities, denoted \textbf{a}, \textbf{b}, and \textbf{c}.

\begin{table}[!h]
\centering
\caption{\textbf{Optimization hyperparameters swept during the search}, sampled per
capacity point. For the diffusion baseline we additionally sweep the number of
denoising steps $T \in \{50,\,100,\,200\}$ and the terminal noise level
$\beta_{\text{end}} \in \{0.02,\,0.05\}$, using a linear schedule with
$\beta_{\text{start}}=10^{-4}$.}
\label{tab:hpo-space}
\begin{tabular}{lll}
\toprule
Hyperparameter & Range / set & Sampling \\
\midrule
Learning rate        & $[10^{-4},\,5\times10^{-3}]$ & log-uniform \\
Batch size           & $\{256,\,512\}$              & categorical \\
LR decay rate $r$    & $\{0.99,\,0.995,\,1.0\}$     & categorical \\
LR transition $\tau$ & $\{500,\,1000,\,2000\}$      & categorical \\
\bottomrule
\end{tabular}
\end{table}

We compare against two classical generative baselines, a normalizing flow and a
diffusion model. The normalizing flow is a neural spline flow~\cite{nsf} with $K$
rational-quadratic coupling transforms. The conditioner of each transform is an MLP
with the listed hidden widths, and each spline uses $B$ bins. We implement it with
\texttt{Zuko}~\cite{zuko} and \texttt{PyTorch}~\cite{pytorch}.
Table~\ref{tab:nf-arch} lists the three plotted capacities.

\begin{table}[h]
\centering
\caption{\textbf{Normalizing-flow architectures for the plotted capacities.}}
\label{tab:nf-arch}
\begin{tabular}{lcccr}
\toprule
Model & Coupling transforms $K$ & Conditioner hidden widths & Spline bins $B$ & \#Params \\
\midrule
a & 1 & $[16]$       & 4 & 422 \\
b & 2 & $[32]$       & 8 & 3{,}228 \\
c & 2 & $[32,\,64]$  & 8 & 10{,}396 \\
\bottomrule
\end{tabular}
\end{table}

The diffusion baseline is a denoising diffusion probabilistic model (DDPM)~\cite{ddpm}
whose noise-prediction network $\epsilon_\theta$ is an MLP of hidden width $H$ and
depth $L$, conditioned on a sinusoidal time embedding of dimension $d_t$ projected
through a two-layer MLP. We tune the number of denoising steps and the noise schedule
jointly with the optimization hyperparameters, and neither affects the parameter
count. Table~\ref{tab:ddpm-arch} lists the three plotted capacities.

\begin{table}[h]
\centering
\caption{\textbf{Diffusion (DDPM) architectures for the plotted capacities.}}
\label{tab:ddpm-arch}
\begin{tabular}{lcccr}
\toprule
Model & Hidden width $H$ & Depth $L$ & Time-embed dim $d_t$ & \#Params \\
\midrule
a & 32  & 2 & 32 & 4{,}354 \\
b & 64  & 3 & 64 & 21{,}058 \\
c & 128 & 4 & 64 & 91{,}394 \\
\bottomrule
\end{tabular}
\end{table}

Figure~\ref{fig:exp3-baselines} shows the distributions learned by the two classical
baselines across the three capacities. Both the normalizing flow (top row) and the
diffusion model (bottom row) reproduce the \emph{blobs-on-spiral} structure more
closely as capacity increases, with the largest capacity (c) resolving the individual
modes better that the smallest capacity (a) blurs together. 
However, even at the number of parameters comparable to the training set size (100,000) none of the model can not resolve all modes completely.
In the main text we show, for each model, the single best run across all capacities and seeds.
Here we display the full per-capacity sweep so that the dependence on model size is visible.

\begin{figure}[h]
\centering
\includegraphics[width=\linewidth]{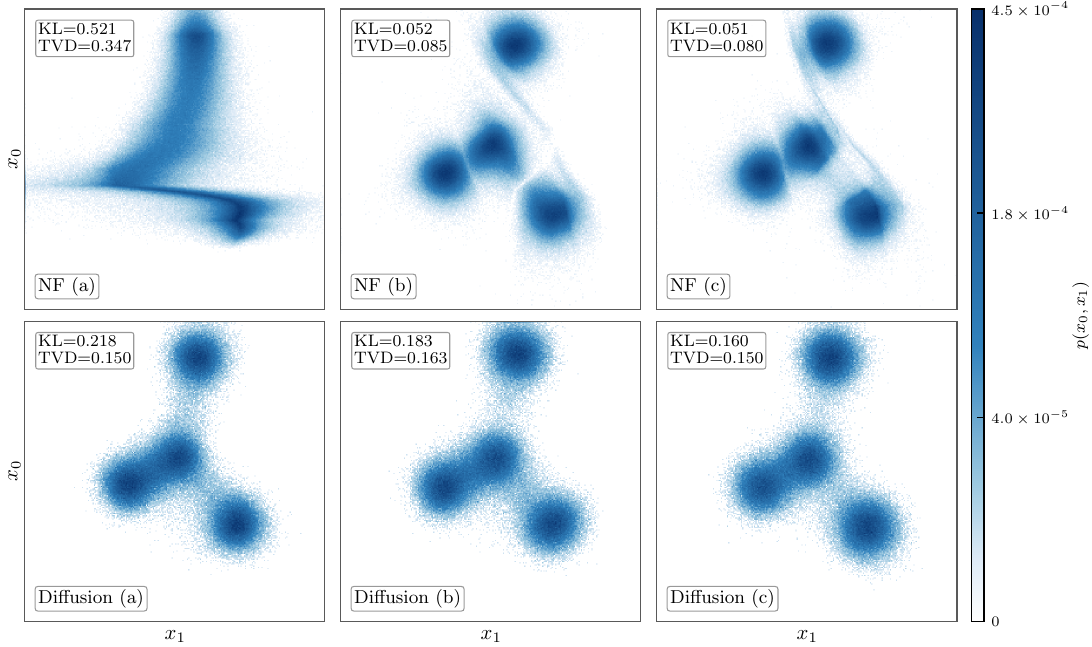}
\caption{%
\textbf{Classical baseline samples on \emph{blobs-on-spiral}.}
Learned joint probability mass over the $2^{8}\times 2^{8}$ grid for the
normalizing-flow (top row) and diffusion (bottom row) baselines at three model
capacities, denoted (a), (b), and (c) in order of increasing size. Architectures and
parameter counts are listed in Tables~\ref{tab:nf-arch} and~\ref{tab:ddpm-arch}.
Each panel shows the best of 10 seeds for the configuration selected by
the hyperparameter search of App.~\ref{app:2d-classical}. All panels share a common color scale with power-law scaling, where darker shading
indicates higher probability mass. The target distribution is shown in
Fig.~\ref{fig:iqp-benchmark}.
}
\label{fig:exp3-baselines}
\end{figure}

\subsection{Multivarite benchmark}

The finance dataset consists of daily log-returns $r_t = \log(c_t / c_{t-1})$ of four
assets over the 2008 calendar year, namely the S\&P~500 (\texttt{\^{}GSPC}), the
Nikkei~225 (\texttt{\^{}N225}), a gold ETF (\texttt{GLD}), and a long-dated U.S.\
Treasury bond ETF (\texttt{TLT}), giving $k = 4$ features. We auto-adjust the closing
prices, align them across the four assets, dropping any day missing a quote, for
example from differing market holidays, and difference the aligned series. This leaves
$N = 236$ daily returns for 2008, and we use the entire series. We scale and discretise
each feature with $d = 4$ bits ($2^4 = 16$ levels per feature), for a total of
$k\,d = 16$ visible qubits and a joint support of $16^4 = 65{,}536$ bins. The small
sample size relative to the support makes this a deliberately data-scarce, heavy-tailed
benchmark.

For this benchmark the IQP+MMD baseline uses maximum Pauli weight $2$ and $m = 4$
hidden qubits. As shown in Fig.~\ref{fig:exp5-finance-heatmaps}, the DQGM reproduces the pairwise
marginals of the finance data, recovering the shape and orientation of most feature
pairs, whereas the IQP+MMD baseline misses the correlation structure of nearly every
pair and attains higher KLD and TVD throughout.

\begin{figure}[!h]
\centering
\includegraphics[width=\textwidth]{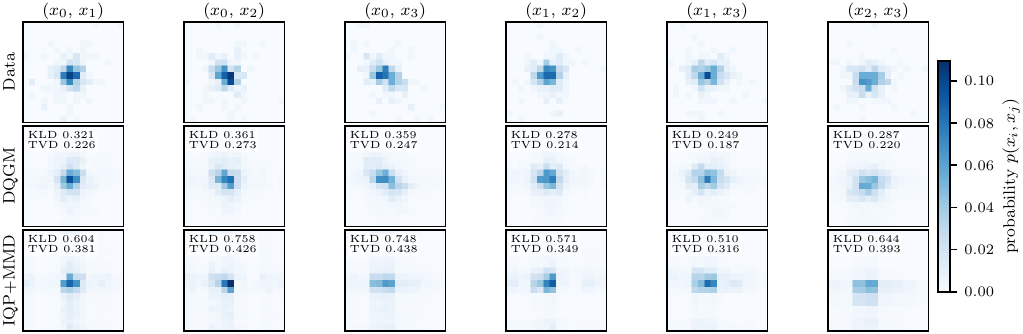}\\[2pt]
\caption{\textbf{Pairwise two-dimensional marginals $p(x_i, x_j)$ for the multivariate
        benchmark on the finance dataset.} The top row is the training-data (target)
        distribution and the lower rows are the trained DQGM and IQP+MMD models, and
        the columns are the unordered feature pairs. All panels share a
        common color scale (white at zero) and a single color bar, and each model
        panel is annotated with the Kullback--Leibler divergence (KLD) and total
        variation distance (TVD) to the matching data panel.}
\label{fig:exp5-finance-heatmaps}
\end{figure}

\clearpage
\newpage
\section{Supplementary numerical results}
\label{app:supp-numerics}

This appendix collects additional numerical results for different datasets.

\subsection{Additional Multivariate distributions}

Beyond the finance dataset of the main text, we test the model on a
higher-dimensional multivariate distribution drawn from particle physics. We use
the JetNet dataset~\cite{jetnet}, a standard benchmark for generative models of
jet substructure, which gives a heavy-tailed joint distribution over a larger
feature set than the finance data.

We use the JetNet gluon-jet sample (jet type \texttt{g}). For each jet we take the
relative transverse momentum $p_T^{\mathrm{rel}}$ (the fraction of the jet transverse
momentum carried by a constituent) of its five leading (highest-$p_T$) particles,
giving $k = 5$ features. We retain only $p_T^{\mathrm{rel}}$ and discard the angular
constituent coordinates. We scale each feature to the unit interval by min--max
normalisation and discretise it with $d = 3$ bits ($2^3 = 8$ levels per feature), for
a total of $k\,d = 15$ visible qubits and a joint support of $8^5 = 32{,}768$ bins. We
draw $N = 10^{6}$ jets, the full statistics used for training. The IQP+MMD baseline
uses maximum Pauli weight $2$ and $m = 4$ hidden qubits.

Fig.~\ref{fig:exp5-HEP-heatmaps} shows the pairwise two-dimensional marginals and
Fig.~\ref{fig:exp5} the correlation-rank plot. On JetNet the DQGM recovers the ordering of the ten pairwise correlations, with a Spearman rank correlation of $\rho = 0.93$ against
$\rho = 0.38$ for the IQP+MMD baseline, and attains lower KLD and TVD across the pair
marginals. The five-feature support and the coarse $3$-bit precision make this a
harder benchmark than the finance data, yet the DQGM still captures the dominant
cross-feature structure.

\begin{figure}[!h]
    \centering
    \includegraphics[width=.5\linewidth]{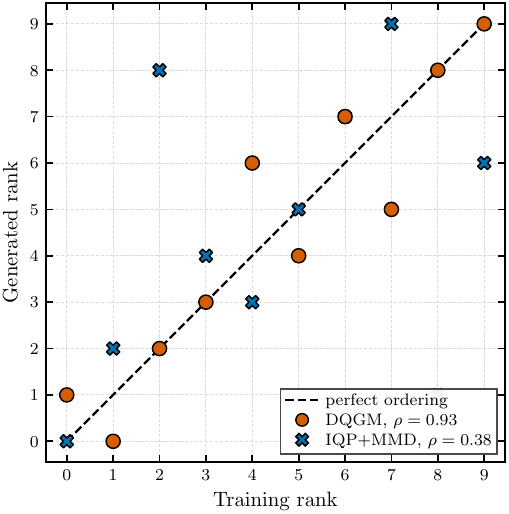}
    \caption{\textbf{Preservation of the pairwise-correlation ordering on JetNet.}
        For every feature pair we rank the Pearson correlations of the training data
        against those of one million samples drawn from each model, so points on the
        dashed diagonal indicate that the model reproduces the empirical ordering
        exactly. The legend reports the Spearman rank correlation $\rho$. The JetNet
        particle-physics dataset has $k=5$ features and $\binom{5}{2}=10$ pairs, and we
        compare DQGM against the IQP+MMD baseline.}
    \label{fig:exp5}
\end{figure}

\begin{figure}[!h]
\centering
\includegraphics[width=\textwidth]{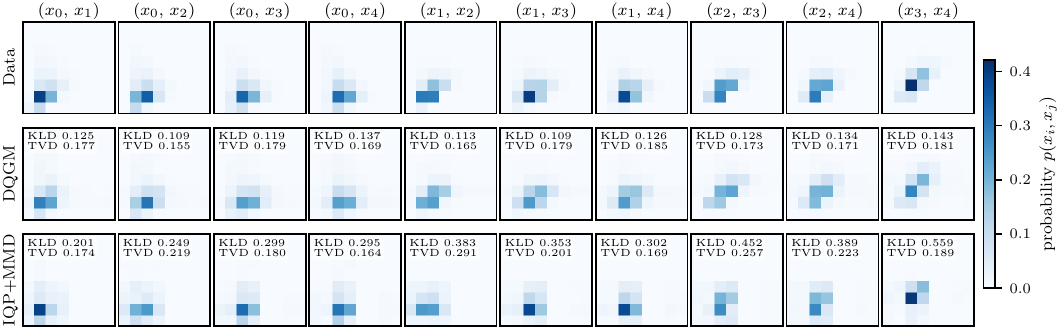}
\caption{\textbf{Pairwise two-dimensional marginals $p(x_i, x_j)$ on JetNet.}
        The top row is the training-data (target) distribution and the lower rows are
        the trained DQGM and IQP+MMD models, and the columns are the unordered feature
        pairs. All panels share a common colour scale (white at zero) and a single colour
        bar, and each model panel is annotated with the Kullback--Leibler divergence
        (KLD) and total variation distance (TVD) to the matching data panel. The JetNet
        particle-physics dataset has $k=5$ features and $\binom{5}{2}=10$ pair columns.}
\label{fig:exp5-HEP-heatmaps}
\end{figure}

\subsection{Upsampling}
\label{app:upsampling}

A trained DQGM represents a continuous density, and the output precision is set by the
number of qubits read out through the per-feature inverse QFT rather than fixed at
training time. We can therefore train at one precision and sample at a higher one
without retraining \cite{kyriienko2024,martinezdelejarza2025}, which we refer to as upsampling. We append ancilla qubits in $\ket{0}$ to each per-feature inverse QFT, refining the readout grid while leaving the learned parameters unchanged.

We train DQGM circuits at $8$-bit output precision and sample from it using the same parameters at $10$ and $12$ bits. Fig.~\ref{fig:exp4-extended} shows the result for two targets, a cosine density and a Gaussian mixture. The KLD and TVD to the analytical target at the corresponding precision are essentially unchanged across the three readouts, so the
upsampled distributions interpolate the learned density onto the finer grid rather than
improve or degrade the fit.

\begin{figure}[!h]
    \centering
    \includegraphics[width=.92\textwidth]{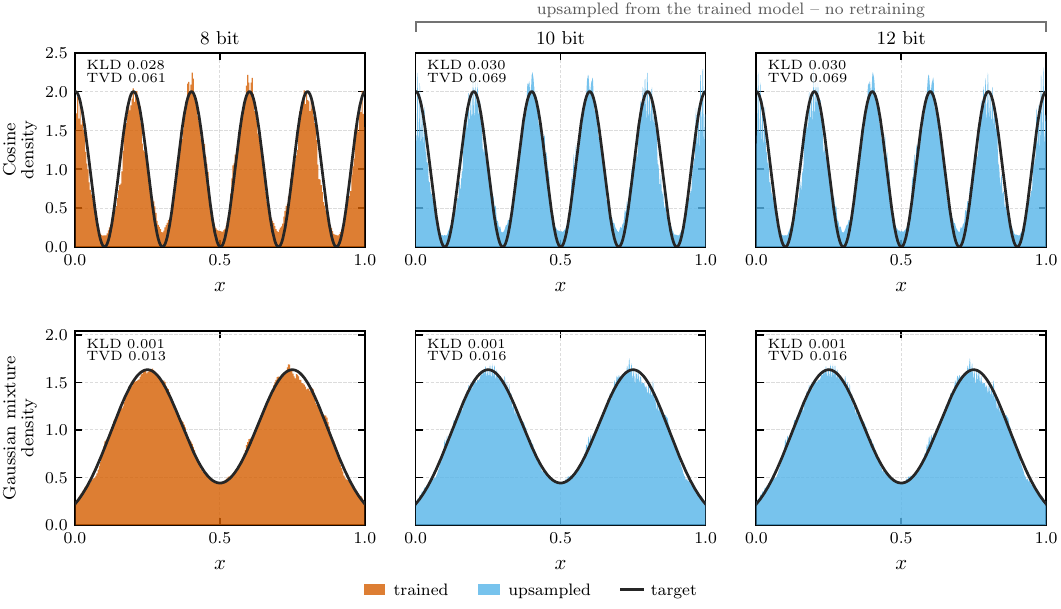}
    \caption{\textbf{Sampling a trained circuit at higher precision by upsampling, with
        no retraining.} Each row is a separate trained DQGM run, a cosine density in the
        top row and a Gaussian mixture in the bottom row, both trained at $8$-bit output
        precision. The leftmost panel shows the trained $8$-bit readout, and the two
        bracketed panels show the same learned parameters read out at $10$ and $12$ bits
        by appending ancilla qubits in $\ket{0}$ at the MSB end of each per-feature
        inverse QFT, with no additional training. All distributions are exact,
        shot-noise-free statevector simulations. The solid black line is the analytical
        target at the corresponding precision, and each panel is annotated with the
        Kullback--Leibler divergence (KLD) and total variation distance (TVD) to it. The
        error is essentially unchanged across precisions, showing that the learned
        continuous density is interpolated onto the finer grid rather.}
    \label{fig:exp4-extended}

\end{figure}

\end{document}